\title{Hedonic Games and Treewidth Revisited} 
\titlerunning{Hedonic Games and Treewidth Revisited} 
\author{Tesshu Hanaka}{Department of Mathematical Informatics, Graduate School of Informatics, Nagoya University}{hanaka@i.nagoya-u.ac.jp}{https://orcid.org/0000-0001-6943-856X}{}
\author{Michael Lampis}
{Université Paris-Dauphine, PSL University, CNRS,
LAMSADE, 75016, Paris,
France}{michail.lampis@lamsade.dauphine.fr}{https://orcid.org/0000-0002-5791-0887}{}
\authorrunning{Tesshu Hanaka and Michael Lampis} 
\keywords{Hedonic Games, Nash Equilibrium, Treewidth} 
\begin{document}

\maketitle

\begin{abstract}

We revisit the complexity of the well-studied notion of Additively Separable
Hedonic Games (ASHGs). Such games model a basic clustering or coalition
formation scenario in which selfish agents are represented by the vertices of
an edge-weighted digraph $G=(V,E)$, and the weight of an arc $uv$ denotes the
utility $u$ gains by being in the same coalition as $v$.  We focus on
(arguably) the most basic stability question about such a game: given a graph,
does a Nash stable solution exist and can we find it efficiently?

We study the (parameterized) complexity of ASHG stability when the
underlying graph has treewidth $t$ and maximum degree $\Delta$. The current
best FPT algorithm for this case was claimed by Peters [AAAI 2016], with time
complexity roughly $2^{O(\Delta^5t)}$. We present an algorithm with parameter
dependence $(\Delta t)^{O(\Delta t)}$, significantly improving upon the
parameter dependence on $\Delta$ given by Peters, albeit with a slightly worse
dependence on $t$. Our main result is that this slight performance
deterioration with respect to $t$ is actually completely justified: we observe
that the previously claimed algorithm is incorrect, and that in fact no
algorithm can achieve dependence $t^{o(t)}$ for bounded-degree graphs, unless
the ETH fails.  This, together with corresponding bounds we provide on the
dependence on $\Delta$ and the joint parameter establishes that our algorithm
is essentially optimal for both parameters, under the ETH.


We then revisit the parameterization by treewidth alone and resolve a question
also posed by Peters by showing that Nash Stability remains strongly NP-hard on
stars under additive preferences.  Nevertheless, we also discover an island of
mild tractability: we show that Connected Nash Stability is solvable in
pseudo-polynomial time for constant $t$, though with an XP dependence on $t$
which, as we establish, cannot be avoided.

\end{abstract}

\section{Introduction}

Coalition formation is a topic of central importance in computational social
choice and in the mathematical social sciences in general. The goal of its
study is to understand how groups of selfish agents are likely to partition
themselves into teams or clusters, depending on their preferences. The most
well-studied case of coalition formation are \emph{hedonic games}, which have
the distinguishing characteristic that each agent's utility only depends on the
coalition on which she is placed (and not on the coalitions of other players).
Hedonic games have recently been an object of intense study also from the
computer science perspective
\cite{AloisioFV20,AzizBBHOP19,BarrotOSY19,BarrotY19,BoehmerE20,0001BW21,BullingerK21,0001MM21,IgarashiOSY19,OhtaBISY17,SliwinskiZ17},
due in part to their numerous applications in, among others, social network
analysis \cite{Olsen09}, scheduling group activities \cite{DarmannEKLSW18}, and
allocating tasks to wireless agents \cite{SaadHBDH11}. For more information we
refer the reader to \cite{Cechlarova16} and the relevant chapters of standard
computational social choice texbooks \cite{AzizS16}.


Hedonic games are extremely general and capture many interesting scenarios in
algorithmic game theory and computational social choice. Unfortunately, this
generality implies that most interesting questions about such games are
computationally hard; indeed, even encoding the preferences of agents generally
takes exponential space. This has motivated the study of natural succinctly
representable versions of hedonic games.  In this paper, we focus on one of the
most widely-studied such models called Additively-Separable Hedonic Games
(ASHG).  In this setting the interactions between agents are given by an
edge-weighted directed graph $G=(V,E)$, where the weight of an arc $uv\in E$
denotes the utility that $u$ gains by being placed in the same coalition as
$v$. Thus, vertices which are not connected by an arc are considered to be
indifferent to each other. Given a partition into coalitions, the utility of a
player $v$ is defined as the sum of the weights of out-going arcs from $v$ to
its own coalition. 

\begin{table}
\begin{tabular}{|l|l|l|} 
\hline
Parameter & Algorithms & Lower Bounds \\
\hline
$t,p$ &  & Strongly NP-hard for Stars (G) (\cref{thm:paraNP})\\
                    & $(nW)^{O(t^2)}$ (C) (\cref{thm:algtw})& No $f(p)\cdot n^{o(p/\log p)}$ (C) (\cref{thm:whard})\\
\hline
$t,p+\Delta$ & $\left(\Delta
t\right)^{O(\Delta t)} (n+\log W)^{O(1)}$ &  No $(p\Delta)^{o(p\Delta)}(nW)^{O(1)}$ (G) (\cref{thm:eth1})\\
 & (G)  (\cref{thm:algtwd})&  \\
 & & No $\Delta^{o(\Delta)}(nW)^{O(1)}$ if $p=O(1)$  (G) (\cref{cor:eth1})\\
 & & No $p^{o(p)}n^{O(1)}$ if $\Delta,W=O(1)$ (\cref{thm:eth2}) (G,C) \\
\hline
\end{tabular} \caption{Summary of results. $t,p,\Delta,W$ denote the treewidth,
pathwidth, maximum degree, and maximum absolute weight. Results denoted by (G)
apply to general (possibly disconnected) \textsc{Nash Stability}, and by (C) to
\textsc{Connected Nash Stability}.}\label{tbl:summary} \end{table}

A rich literature exists studying various questions about ASHGs, including a
large spectrum of stability concepts and social welfare maximization
\cite{AzizBS13,Ballester04,ElkindFF20,FlamminiKMZ21,HanakaKMO19,Olsen09,OlsenBT12,SungD10}.
In this paper we focus on perhaps the most basic notion of stability one may
consider.  We say that a configuration $\pi$ is \emph{Nash Stable} if no agent
$v$ can unilaterally strictly increase her utility by selecting a different
coalition of $\pi$ or by forming a singleton coalition.  The algorithmic
question that we are interested in studying is the following: given an ASHG,
does a Nash Stable partition exist? Even though other notions of stability
exist (notably when deviating players are allowed to collaborate
\cite{BranzeiL09,DeinekoW13,Peters17,Woeginger13}), fully understanding the
complexity of \textsc{Nash Stability} is of particular importance, because of
the fundamental nature of this notion.

\textsc{Nash Stability} of ASHGs has been thoroughly studied and is,
unfortunately, NP-complete. We therefore adopt a parameterized point of view
and investigate whether some desirable structure of the input can render the
problem tractable.  We consider two of the most well-studied graph parameters:
the treewidth $t$ and the maximum degree $\Delta$ of the underlying graph. The
study of ASHGs in this light was previously taken up by Peters \cite{Peters16a}
and the goal of our paper is to improve and clarify the state of the art given
by this previous work.

\subparagraph*{Summary of Results} Our results can be divided into two parts
(see \cref{tbl:summary} for a summary).  In the first part of the paper we
parameterize the problem by $t+\Delta$, that is, we study its complexity for
graphs that have simultaneously low treewidth and low maximum degree. The study
of hedonic games on such graphs was initiated by Peters \cite{Peters16a}, who
already considered a wide variety of algorithmic questions on ASHGs for these
parameters and provided FPT algorithms using Courcelle's theorem. Due to the
importance of \textsc{Nash Stability}, more refined algorithmic arguments were
given in the same work, and it was claimed that \textsc{Connected Nash
Stability} (the variant of the problem where coalitions must be connected in
the underlying graph) and \textsc{Nash Stability} can be decided with parameter
dependence roughly $2^{\Delta^2t}$ and $2^{\Delta^5t}$, respectively (though as
we explain below, these claims were not completely justified). We thus revisit
the problem with the goal of determining the optimal parameter dependence for
\textsc{Nash Stability} in terms of $t$ and $\Delta$.  Our positive
contribution is an algorithm deciding \textsc{Nash Stability} in time
$\left(\Delta t\right)^{O(\Delta t)} (n+\log W)^{O(1)}$, where $W$ is the
maximum absolute weight, significantly improving the parameter dependence for
$\Delta$ (\cref{thm:algtwd}).  This is achieved by reformulating the problem as
a coloring problem with $t\Delta$ colors in a way that encodes the property
that two vertices belong in the same coalition and then using dynamic
programming to solve this problem.  Our main technical contribution is then to
establish that our algorithm is essentially optimal. To that end we first show
that if there exists an algorithm solving \textsc{Nash Stability} in time
$(p\Delta)^{o(p\Delta)}(nW)^{O(1)}$, where $p$ is the pathwidth of the
underlying graph, then the ETH is false (\cref{thm:eth1}).  Hence, it is not
possible to obtain a better parameter dependence, even if we accept a
pseudo-polynomial running time and a more restricted parameter.

If we were considering a parameterization with a single parameter, at this
point we would be essentially done, since we have an algorithm and a lower
bound that match.  However, the fact that $\Delta$ and $t$ are two a priori
independent variables significantly complicates the analysis because,
informally, the space of running time functions that depend on two variables is
not totally ordered.  To see what we mean by that, recall that \cite{Peters16a}
claimed an algorithm with complexity roughly $2^{\Delta^5t}$, while our
algorithm's complexity has the form $(\Delta t)^{\Delta t}$.  The two
algorithms are not directly comparable in performance: for some values of
$\Delta,t$ one is better and for some the other (though the range of parameters
where $2^{\Delta^5t}<(\Delta t)^{\Delta t}$ is quite limited). As a result,
even though \cref{thm:eth1} shows that no algorithm can beat the algorithm of
\cref{thm:algtwd} in all cases, it does not rule out the possibility that some
algorithm beats it in \emph{some} cases, for example when $\Delta$ is much
smaller than $t$, or vice-versa.  We therefore need to work harder to argue
that our algorithm is indeed optimal in essentially all cases.  In particular,
we show that even if pathwidth is constant the problem cannot be solved in
$\Delta^{o(\Delta)}(nW)^{O(1)}$ (\cref{cor:eth1}); and even if $\Delta$ and $W$
are constant, the problem cannot be solved in $p^{o(p)}n^{O(1)}$
(\cref{thm:eth2}).  Hence, we succeed in covering essentially all corner cases,
showing that our algorithm's slightly super-exponential dependence on \emph{the
product} of $\Delta$ and $t$ is truly optimal, and we cannot avoid the slightly
super-exponential on either parameter, even if we were to accept a much worse
dependence on the other.

An astute reader will have noticed a contradiction between our lower bounds and
the algorithms of \cite{Peters16a}. It is also worth noting that
\cref{thm:eth2} applies to both the connected and disconnected cases of the
problem, using an argument due to \cite{Peters16a}.  Hence, \cref{thm:eth2}
implies that, either the ETH is false, or \emph{neither} of the aforementioned
algorithms of \cite{Peters16a} can have the claimed performance, as executing
them on the instances produced by our reduction (which have $\Delta=O(1)$)
would give parameter dependence $2^{O(t)}$, which is ruled out by
\cref{thm:eth2}.  Indeed, in \cref{sec:twd} we explain in more detail that the
argumentation of \cite{Peters16a} lacks an ingredient (the partition of
vertices in each neighborhood into coalitions) which turns out to be necessary
to obtain a correct algorithm and also key in showing the lower bound. Hence,
the slightly super-exponential dependence on $t$ cannot be avoided (under the
ETH), and the dependence on $t$ promised in \cite{Peters16a} is impossible to
achieve: the best one can hope for is the slightly super-exponential dependence
on both $t$ and $\Delta$ given in \cref{thm:algtwd}.

In the second part of the paper, we consider \textsc{Nash Stability} on graphs
of low treewidth, without making any further assumptions (in particular, we
consider graphs of arbitrarily large degree).  This parameterization was
considered by Peters \cite{Peters16a} who showed that the problem is strongly
NP-hard on stars and thus motivated the use of the double parameter $t+\Delta$.
This would initially appear to settle the problem.  However, we revisit this
question and make two key observations: first, the reduction of
\cite{Peters16a} does not show hardness for additive games, but for a more
general version of the problem where preferences of players are not necessarily
additive but are described by a collection of boolean formulas (HC-nets
\cite{ElkindW09,IeongS05}).  
It was therefore explicitly posed as an open question whether \emph{additive}
games are also hard \cite{Peters16a}.  Second, in the reduction of
\cite{Peters16a} coalitions are disconnected.  As noted in
\cite{IgarashiE16,Peters16a}, there are situations where Nash Stable coalitions
make more sense if they are connected in the underlying graph. We therefore ask
whether \textsc{Connected Nash Stability}, where we impose a connectivity
condition on coalitions, is an easier problem.  

Our first contribution is to resolve the open question of \cite{Peters16a} by
showing that imposing either one of these two modifications does \emph{not}
render the problem tractable: \textsc{Nash Stability} of additive hedonic games
is still strongly NP-hard on stars (\cref{thm:paraNP}); and \textsc{Connected
Nash Stability} of hedonic games encoded by HC-nets is still NP-hard on stars
(\cref{thm:paraNP2}).  However, our reductions stubbornly refuse to work for
the natural combination of these conditions, namely, \textsc{Connected Nash
Stability} for additive hedonic games on stars.  Surprisingly, we discover that
this is with good reason: \textsc{Connected Nash Stability} turns out to be
solvable in pseudopolynomial time on graphs of bounded treewidth
(\cref{thm:algtw}).  More precisely, our algorithm, which uses standard dynamic
programming techniques but crucially relies on the connectedness of coalitions,
runs in ``pseudo-XP'' time, that is, in polynomial time when $t=O(1)$ and
weights are polynomially bounded. Completing our investigation we show that
this is essentially best possible: obtaining a pseudo-polynomial time algorithm
with FPT dependence on treewidth (or pathwidth) would contradict standard
assumptions (\cref{thm:whard}).  Hence, in this part we establish that there is
an overlooked case of ASHGs that does become somewhat tractable when we only
parameterize by treewidth, but this tractability is limited.

\subparagraph*{Related work} Deciding if an ASHG admits a partition that is
Nash Stable or has other desirable properties is NP-hard
\cite{AzizBS13,Ballester04,Olsen09,PetersE15,SungD10}.  Hardness remains even
in cases where a Nash Stable solution is guaranteed, such as symmetric
preferences, where the problem is PLS-complete \cite{GairingS19}, and
non-negative preferences, where it is NP-hard to find a non-trivial stable
partition \cite{OlsenBT12}. The problem generally remains hard when we impose
the requirement that coalitions must be connected \cite{BiloGM19,IgarashiE16}.

A related \textsc{Min Stable Cut} problem is studied in \cite{Lampis21}, where
we partition the vertices into two coalitions in a Nash Stable way.
Interestingly, the complexity of that problem turns out to be $2^{O(\Delta
t)}$, since each vertex has $2$ choices; this nicely contrasts with
\textsc{Nash Stability}, where vertices have more choices, and which is
slightly super-exponential parameterized by treewidth.  Similar slightly
super-exponential complexities have been observed with other problems involving
treewidth and partitioning vertices into sets
\cite{HarutyunyanLM21,LokshtanovMS18}.

\section{Preliminaries}

We use standard graph-theoretic notation and assume that the reader is familiar
with standard notions in parameterized complexity, including treewidth and
pathwidth \cite{CyganFKLMPPS15}. We mostly deal with directed graphs and denote
an arc from vertex $u$ to vertex $v$ as $uv$. When we talk about the degree or
the neighborhood of a vertex $v$, we refer to its degree and its neighborhood
in the underlying graph, that is, the graph obtained by forgetting the
directions of all arcs. Throughout the paper $\Delta(G)$ (or simply $\Delta$,
when $G$ is clear from the context) denotes the maximum degree of the
underlying graph of $G$. The Exponential Time Hypothesis (ETH) is the
assumption that there exists $c>1$ such that \textsc{3-SAT} on formulas with
$n$ variables does not admit a $c^n$ algorithm \cite{ImpagliazzoPZ01}. We will
mostly use a somewhat simpler to state (and weaker) form of this assumption
stating that \textsc{3-SAT} cannot be solved in time $2^{o(n)}$.

%

In this paper we will be mostly interested in \emph{Additively Separable
Hedonic Games} (ASHG). In an ASHG we are given a directed graph $G=(V,E)$ and a
weight function $w:V\times V\to\mathbb{Z}$ that encodes agents' preferences.
The function $w$ has the property that for all $u,v\in V$ such that $uv\not\in
E$ we have $w(u,v)=0$, that is, non-zero weights are only given to arcs. A
solution to an ASHG is a partition $\pi$ of $V$, where we refer to the sets of
$V$ as classes or, more simply, as coalitions. For each $v\in V$ and
$S\subseteq V$ the utility that $v$ derives from being placed in the coalition
$S$ is defined as $p_v(S) = \sum_{u\in S\setminus\{v\}} w(v,u)$.  A partition
$\pi$ is Nash Stable if we have the following: for each $v\in V$, if $v$
belongs in the class $S$ of $\pi$, we have $p_v(S) \ge 0$ and for each $S'\in
\pi$ we have $p_v(S)\ge p_v(S')$. In other words, no vertex can strictly
increase its utility by joining another coalition of $\pi$ or forming a
singleton coalition.  We also consider the notion of \emph{Connected Nash
Stable} partitions, which are Nash Stable partitions $\pi$ with the added
property that all classes of $\pi$ are connected in the underlying undirected
graph of $G$.

\section{Parameterization by Treewidth and Degree}\label{sec:twd}

In this section we revisit \textsc{Nash Stability} parameterized by $t+\Delta$,
which was previously studied in \cite{Peters16a}.  Our main positive result is
an algorithm given in \cref{sec:algtwd} solving the problem with dependence
$(t\Delta)^{O(t\Delta)}$. 

Our main technical contribution is then to show in \cref{sec:twdhard} that this
algorithm is essentially optimal, under the ETH.  As explained, we need several
different reductions to settle this problem in a satisfactory way. The main
reduction is given in \cref{thm:eth1} and uses the fact that a partition
restricted to the neighborhood of a vertex with degree $\Delta$ encodes roughly
$\Delta\log\Delta$ bits of information, because there are around
$\Delta^{\Delta}$ partitions of $\Delta$ elements into equivalence classes.
This key idea allows the first reduction to compress the treewidth more and
more as $\Delta$ increases. Hence, we can produce instances where both $t$ and
$\Delta$ are super-constant, but appropriately chosen to match our bound. In
this way, \cref{thm:eth1} rules out running times of the form, say
$(t\Delta)^{t+\Delta}$, as when $t,\Delta$ are both super-constant,
$t+\Delta=o(t\Delta)$. By modifying the parameters of \cref{thm:eth1} we then
obtain \cref{cor:eth1} from the same construction, which states that no
algorithm can have dependence $\Delta^{o(\Delta)}$, even on graphs of bounded
pathwidth.  On the other hand, this type of construction cannot show hardness
for instances of bounded degree, as when $\Delta=O(1)$, then
$\Delta^{\Delta}=O(1)$, so we cannot really compress the treewidth of the
produced instance. Hence, we use a different reduction in \cref{thm:eth2},
showing that the problem cannot be solved with dependence $p^{o(p)}$ on
instances of bounded degree. This reduction uses a super-constant number of
coalitions that ``run through'' the graph, and hence produces instances with
super-constant $t$.  The three complementary reductions together cover the
whole range of possibilities and indicate that there is not much room for
improvement in our algorithm.

It is worth discussing here that, assuming the ETH, \cref{thm:eth2} contradicts
the claimed algorithms of \cite{Peters16a}, which for $\Delta=O(1)$ would solve
\textsc{(Connected) Nash Stability} with dependence $2^{O(t)}$, while
\cref{thm:eth2} claims that the problem cannot be solved in time $2^{o(t\log t)}$.
Let us then briefly explain why the proof sketch for these algorithms in
\cite{Peters16a} is incomplete: the idea of the algorithms is to solve
\textsc{Connected Nash Stability}, and use the arcs of the instance to verify
connectivity. Hence, the DP algorithm will remember, in a ball of distance $2$
around each vertex, which arcs have both of their endpoints in the same
coalition. The claim is that this information allows us to infer the
coalitions. Though this is true if one is given this information for the whole
graph, it is not true locally around a vertex where we only have information
about other vertices which are close by.  In particular, it could be the case
that $u$ has neighbors $v_1,v_2$, which happen to be in the same coalition, but
such that the path proving that this coalition is connected goes through
vertices far from $u$. Because this cannot be verified locally, any DP
algorithm would need to store some connectivity information about the vertices
in a bag which, as implied by \cref{thm:eth2} inevitably leads to a dependence
of the form $t^t$. 

\subsection{Improved FPT Algorithm}\label{sec:algtwd}

In order to obtain our algorithm for \textsc{Nash Stability} we will need two
ingredients.  The first ingredient will be a reformulation of the problem as a
vertex coloring problem.  We use the following definition where, informally, a
vertex is stable if its outgoing weight to vertices of the same color cannot be
increased by changing its color.

\begin{definition} A Stable $k$-Coloring of an edge-weighted digraph $G$ is a
function $c:V\to [k]$ satisfying the following property: for each $v\in V$ we
have $\sum_{u\in c^{-1}(c(v))} w(v,u) \ge \max_{j\in [k+1]} \sum_{u\in
c^{-1}(j)} w(v,u)$.  \end{definition}

Note that in the definition above we take the maximum over $j\in [k+1]$ of the
total weight of $v$ towards color class $j$. Since $c$ is a function that uses
$k$ colors, we have $c^{-1}(k+1)=\emptyset$ and hence this ensures that the
total weight of $v$ towards its own color must always be non-negative in a
stable coloring. Also note that to calculate the total weight from $v$ to a
certain color class $j$, it suffices to consider the vertices of color $j$ that
belong in the out-neighborhood of $v$.

Our strategy will be to show that, for appropriately chosen $k$, deciding
whether a graph admits a stable $k$-Coloring is equivalent to deciding whether
a Nash Stable partition exists. Then, the second ingredient of our approach is
to use standard dynamic programming techniques to solve Stable $k$-Coloring on
graphs of bounded treewidth and maximum degree.

The key lemma for the first part is the following:

\begin{lemma}\label{lem:stable} Let $G=(V,E)$ be an edge-weighted digraph whose
underlying graph has maximum degree $\Delta$ and admits a tree decomposition
with maximum bag size $t$.  Then, $G$ has a Nash Stable partition if and only
if it admits a Stable $k$-Coloring for $k=t\cdot \Delta$.  \end{lemma}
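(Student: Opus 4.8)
The plan is to prove the two directions of the equivalence separately, in both cases by a transformation that preserves stability.

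\medskip

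\textbf{From a Stable $k$-Coloring to a Nash Stable partition.} This direction is the easy one. Given a stable $k$-coloring $c$, I take the partition $\pi$ whose classes are the non-empty color classes $c^{-1}(j)$, $j\in[k]$. I claim $\pi$ is Nash Stable. Fix $v$ in class $S=c^{-1}(c(v))$. The stability condition for $c$ says $p_v(S)=\sum_{u\in S\setminus\{v\}}w(v,u)\ge \max_{j\in[k+1]}\sum_{u\in c^{-1}(j)}w(v,u)$; since the $j=k+1$ term is $0$, this already gives $p_v(S)\ge 0$, so $v$ has no incentive to become a singleton. And for any other class $S'=c^{-1}(j)\in\pi$ we get $p_v(S)\ge \sum_{u\in S'}w(v,u)=p_v(S')$ (note $v\notin S'$ so the sum over $S'$ and over $S'\setminus\{v\}$ coincide), so $v$ has no incentive to move. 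Hence $\pi$ is Nash Stable. Note this direction does not use the degree or treewidth bound at all; it works for any $k$.

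\medskip

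\textbf{From a Nash Stable partition to a Stable $k$-Coloring.} This is the substantive direction, where $k=t\Delta$ is needed. The issue is that a Nash Stable partition may use many more than $t\Delta$ coalitions, so we cannot directly read off a coloring; we need to \emph{recolor} coalitions so that only $t\Delta$ colors are used, while never making two vertices of the same coalition receive different colors (which would be fatal) and never making two vertices of \emph{different} coalitions receive the same color when that would break stability. Observe first that a vertex $v$ only ``cares'' about the coalition of its out-neighbors, and it has at most $\Delta$ neighbors; so from $v$'s local point of view at most $\Delta+1$ colors (the colors appearing on $N(v)\cup\{v\}$) are relevant, and the stability of $v$ only depends on the restriction of the coloring to $N[v]$. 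So it suffices to produce a proper ``coalition-coloring'': a map from the set of coalitions to $[k]$ such that, for every vertex $v$, all coalitions intersecting $N[v]$ receive pairwise distinct colors. If we have such a coloring, then recoloring each vertex by the color of its coalition yields $c$ with $c^{-1}(c(v))\cap N[v]$ equal to $S\cap N[v]$ and $c^{-1}(j)\cap N[v]$ equal either to $\emptyset$ or to $S'\cap N[v]$ for some coalition $S'\ne S$; combined with Nash stability of $\pi$ this gives exactly the stable $k$-coloring inequality (singleton deviation covered by $j=k+1$, as before).

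\medskip

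\textbf{Constructing the coalition-coloring.} Build the auxiliary ``conflict graph'' $H$ on the vertex set equal to the set of coalitions of $\pi$, putting an edge between two coalitions $S,S'$ whenever there is a vertex $v$ with $S\cap N[v]\ne\emptyset$ and $S'\cap N[v]\ne\emptyset$ (equivalently, $S$ and $S'$ have representatives at distance $\le 2$ in the underlying graph). A proper vertex coloring of $H$ is exactly a coalition-coloring of the kind we want, so it remains to bound the chromatic number of $H$, and I will do this by bounding its degeneracy (or, directly, finding a good elimination order). The key structural fact: take a tree decomposition of $G$ of width $<t$ (bag size $t$). Contract each coalition: I want to argue that $H$ inherits, in an appropriate sense, a tree decomposition of bag size at most $t\Delta$, because a bag of the original decomposition contains $\le t$ vertices, each with $\le\Delta$ neighbors, hence ``touches'' at most $t\Delta$ coalitions — wait, we must be careful, since a coalition may be spread over many bags. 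The clean way is: for each coalition $S$, pick a representative bag; then argue that the coalitions whose representative bags lie in a subtree, together with the ``boundary'' coalitions, behave well. Alternatively, and more robustly, I would show directly that $H$ has treewidth $< t\Delta$ by taking each bag $B$ of the decomposition of $G$ and replacing it with the set of coalitions meeting $N[B]=\bigcup_{v\in B}N[v]$; since $|N[B]|\le t\Delta$ (each of the $\le t$ vertices of $B$ has closed neighborhood of size $\le \Delta+1$... this is actually $\le t(\Delta+1)$, so I would set $k=t(\Delta+1)$ or absorb the $+1$ — here I should double-check the exact constant the paper wants, but $t\Delta$ works once one notes that for $v\in B$ the vertex $v$ itself lies in a coalition that is anyway counted, or after a slight tightening of the neighborhood-counting), this new collection of bags has size $\le t\Delta$, and one verifies the two tree-decomposition axioms for $H$: every coalition appears in some new bag (take any vertex of it and a bag containing that vertex), every $H$-edge $SS'$ is covered (the witnessing vertex $v$ lies in some bag $B$, and then $S,S'$ both meet $N[B]$, so both are in the new bag of $B$), and the bags containing a fixed coalition $S$ form a connected subtree (this uses connectivity of the bags-containing-$x$ subtrees in $G$ together with the fact that a coalition, as a set of vertices, need \emph{not} be connected — so this last axiom is the delicate point and may require replacing "$N[B]$ meets $S$" by a more careful choice, e.g. introducing $S$ into the new bag of $B$ only when $B$ itself, not merely $N[B]$, meets $S$, and separately handling the coverage of edges; I would reconcile these two requirements by taking the new bag of $B$ to be $\{S : S\cap B\ne\emptyset\} \cup \{S : S\cap N(v)\ne\emptyset \text{ for some } v\in B \text{ with } v\text{'s coalition}\in \text{new bag}\}$ or simply by the standard trick of noting $H$ is a subgraph of the square-like graph whose treewidth we can bound). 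Once $H$ has treewidth $< t\Delta$, it is $(t\Delta-1)$-degenerate, hence $t\Delta$-colorable, giving the coalition-coloring and completing the proof.

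\medskip

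The main obstacle I anticipate is exactly this last point: getting a clean bound on the chromatic number (equivalently treewidth/degeneracy) of the conflict graph $H$ while respecting the connected-subtree axiom of tree decompositions, since coalitions are not connected subsets of $G$. The honest fix is to not insist on a genuine tree decomposition of $H$, but to directly exhibit a good elimination ordering: process bags of $G$ from leaves to root; when we ``forget'' a vertex $v$, the coalitions that touch $N[v]$ and do not appear deeper in the tree can be colored greedily, and one checks each such coalition has $<t\Delta$ already-colored conflict-neighbors at that moment. I expect the counting to go through with $k=t\Delta$ exactly because each not-yet-colored conflict is witnessed within the current bag's neighborhood, of size $\le t\Delta$.
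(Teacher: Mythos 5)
Your first direction is fine and is the same as the paper's. For the converse you correctly reduce the task to a ``coalition-coloring'': color the coalitions of the Nash stable partition $\pi$ with $t\Delta$ colors so that any two coalitions meeting a common closed neighborhood (equivalently, at distance at most $2$) get distinct colors; this is also the paper's key claim. But your plan for producing that coloring has a genuine gap, and it is not just the bookkeeping issue you flag with the subtree axiom. As stated, the claim you are trying to prove is \emph{false} for an arbitrary Nash stable partition: a coalition may be a scattered, disconnected set, and then the conflict graph $H$ can contain arbitrarily large cliques. Concretely, take all weights $0$ on a long path ($t=2$, $\Delta=2$): every partition is Nash stable, and one can choose $m$ coalitions, each a scattered subset of the path, so that every pair of coalitions has two adjacent representatives somewhere; then $H$ is an $m$-clique and no $t\Delta$-coloring of $H$ exists. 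So neither a treewidth/degeneracy bound on $H$ nor your leaf-to-root greedy elimination can succeed without an additional idea — the ``coalitions that touch $N[v]$ and do not appear deeper in the tree'' can already have unboundedly many colored conflict-neighbors.

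The missing ingredient is a normalization step: one may assume without loss of generality that every coalition is connected in $G^2$ (the square of the underlying graph). If a coalition $C$ has a part $C'$ at distance at least $3$ from $C\setminus C'$, splitting $C$ into $C'$ and $C\setminus C'$ changes no vertex's utility in its own class and creates no better deviation, so Nash stability is preserved. This is exactly what the paper does first, and it is what makes the counting go through: working in a tree decomposition of $G^2$ with bags of size $O(t\Delta)$ (obtained by adding to each bag the neighbors of its vertices), one picks the coalition $C$ whose topmost bag $B$ is deepest; any conflicting coalition $C'$ is connected in $G^2$ and adjacent to $C$ in $G^2$, so by the separator property of $B$ it must intersect $B$ (otherwise it would lie strictly below $B$, contradicting the choice of $C$). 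Hence $C$ has at most $|B|-1\le t\Delta-1$ conflicting coalitions, and induction (equivalently, your degeneracy/elimination idea, but ordered by top bags in the decomposition of $G^2$) finishes the coloring. Your worry about the exact constant ($t\Delta$ versus $t(\Delta+1)$) is a minor point — the paper itself uses $k=(t+1)\Delta$ in the algorithmic theorem — but without the splitting step and the connectivity-in-$G^2$ argument your proof of the substantive direction does not go through.
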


\begin{proof} First, suppose that we have a Stable $k$-Coloring $c:V\to [k]$ of
the graph for some value $k$.  We obtain a Nash Stable partition of $V(G)$ by
turning each color class into a coalition. By the definition of Stable
$k$-Coloring, each vertex has at least as high utility in its own color class
(and hence its own coalition) as in any other, so this partition is stable.

%

For the converse direction, suppose that there exists a Nash Stable partition
$\pi$ of $G$. We will first attempt to color the coalitions of $\pi$ in a way
that any two coalitions which are at distance at most two receive distinct
colors, while using at most $t\cdot \Delta$ colors. In the remainder, when we
refer to the distance between two sets of vertices $S_1,S_2$, we mean
$\min_{u\in S_1, v\in S_2}d(u,v)$, where distances are calculated in the
underlying graph.

Consider the graph $G^2$ obtained from the underlying graph of $G$ by
connecting any two vertices which are at distance at most $2$ in the underlying
graph of $G$. We can construct a tree decomposition of $G^2$ where all bags
contain at most $t\cdot \Delta$ vertices by taking the assumed tree
decomposition of $G$ and adding to each bag the neighbors of all vertices
contained in that bag.  Furthermore, we can assume without loss of generality
that any equivalence class $C$ of the Nash Stable partition $\pi$ is connected
in $G^2$.  If not, that would mean that there exists a class $C$ that contains
a connected component $C'\subseteq C$ such that $C'$ is at distance at least
$3$ from $C\setminus C'$ in the underlying graph of $G$. In that case we could
partition $C$ into two classes $C', C\setminus C'$, without affecting the
stability of the partition.

Formally now the claim we wish to make is the following:

\begin{claim}\label{claim} There is a coloring $c$ of the equivalence classes
of $\pi$ with $k=t\cdot \Delta$ colors such that any two classes $C_1,C_2$ of
$\pi$ which are at distance at most two in the underlying graph of $G$ receive
distinct colors.  \end{claim}

\begin{claimproof}

We prove the claim by induction on the number of equivalence classes of $\pi$.
If there is only one class the claim is trivial.

Consider a rooted tree decomposition of $G^2$. For an equivalence class $C$ of
$\pi$ we say that the bag $B$ is the top bag for $C$ if $B$ contains a vertex
of $C$ and no bag that is closer to the root contains a vertex of $C$. Select
an equivalence class $C$ of $\pi$ whose top bag is as far from the root as
possible. We claim that there are at most $t\cdot\Delta-1$ classes $C'$ which
are at distance at most $2$ from $C$ in $G$.  

In order to prove that there are at most $t\cdot\Delta-1$ other classes at
distance at most two from $C$, consider such a class $C'$, which is therefore
at distance one from $C$ in $G^2$. Let $B$ be the top bag of $C$.  If $C'$ does
not contain any vertex that appears in $B$ then we get a contradiction as
follows: first, $C'$ has a neighbor of a vertex of $C$, so these two vertices
must appear together in a bag; since all vertices of $C$ appear in the sub-tree
rooted at $B$, some vertices of $C'$ must appear strictly below $B$ in the
decomposition; since $B$ is a separator of $G^2$ and $C'$ is connected, if no
vertex of $C'$ is in $B$ then all vertices of $C'$ appear below $B$ in the
decomposition; but then, this contradicts the choice of $C$ as the class whose
top bag is as far from the root as possible. As a result, for each $C'$ that is
a neighbor of $C$ in $G^2$, there exists a distinct vertex of $C'$ in $B$.
Since $|B|\le t\cdot \Delta$ and $B$ contains a vertex of $C$, we get that the
coalitions $C'$ which are neighbors of $C$ in $G^2$ are at most $t\cdot
\Delta-1$.  

We now remove all vertices of $C$ from the graph and claim that $\pi$
restricted to the new graph is still a Nash Stable partition. By induction,
there is a coloring of the remaining coalitions of $\pi$ that satisfies the
claim. We keep this coloring and assign to $C$ a color that is not used by any
of the at most $k-1$ coalitions which are at distance two from $C$. Hence, we
obtain the claimed coloring of the classes of $\pi$.  \end{claimproof}

From \cref{claim} we obtain a coloring of the equivalence classes of $\pi$ with
$k=t\cdot \Delta$ colors, such that any two equivalence classes which are at
distance at most $2$ in the underlying graph of $G$ receive distinct colors. We
now obtain a coloring of $V$ by assigning to each vertex the color of its
class. In the out-neighborhood of each vertex $v$ the partition induced by the
coloring is the same as that induced by $\pi$, since all the vertices in the
out-neighborhood of $v$ are at distance at most $2$ from each other in $G$.
Hence, the $k$-Coloring must be stable, because otherwise a vertex would have
incentive to deviate in $\pi$ by joining another coalition or by becoming a
singleton.  \end{proof}


\begin{theorem}\label{thm:algtwd} There exists an algorithm which, given an
ASHG defined on a digraph $G=(V,E)$ whose underlying graph has maximum degree
$\Delta$ and a tree decomposition of the underlying graph of $G$ of width $t$,
decides if a Nash Stable partition exists in time $\left(\Delta
t\right)^{O(\Delta t)} (n+\log W)^{O(1)}$, where $n=|V|$ and $W$ is the largest
absolute weight.\end{theorem}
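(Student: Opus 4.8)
The plan is to reduce \textsc{Nash Stability} to \emph{Stable $k$-Coloring} via \cref{lem:stable} with $k = t\Delta$, and then solve Stable $k$-Coloring by dynamic programming over the given tree decomposition. First I would preprocess the tree decomposition into a nice tree decomposition of width $t$ in the standard way, which costs only $(n+\log W)^{O(1)}$ time and does not change the width. The coloring $c : V \to [k]$ assigns each vertex one of $k = t\Delta$ colors, so a natural DP state would record, for each bag $B$, the restriction $c|_B$ of the coloring to the bag; there are at most $k^{t} = (t\Delta)^{O(t)}$ such restrictions, which already fits the target bound. However, a bare coloring of the bag is not enough to check stability: to verify that a vertex $v$ is stable we need to know, for every color $j \in [k+1]$, the total weight $\sum_{u \in c^{-1}(j)} w(v,u)$ that $v$ sends to color class $j$, and since $v$ has degree at most $\Delta$, all the out-neighbors of $v$ relevant to this sum lie in the closed neighborhood $N[v]$.

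The key structural point, which I would make precise next, is that in a nice tree decomposition every vertex $v$ together with \emph{all} of its $\le\Delta$ neighbors appears together in some bag — more carefully, there is a bag containing $v$ from which we can ``see'' the colors of all neighbors of $v$ as they are introduced/forgotten. To make this usable in a DP I would augment the state: instead of only $c|_B$, the DP table is indexed by $c|_B$ together with, for each vertex $v$ currently in the bag $B$, a partial tally of the weight already accumulated from $v$ to each color class by the subgraph processed so far. Since each weight is bounded by $W$ in absolute value and $v$ has degree $\le \Delta$, each such tally is an integer in a range of size $O(\Delta W)$, but we only need $O(\Delta)$ of these numbers per vertex (one per color that actually occurs among $v$'s neighbors), and in fact it suffices to store, for each $v\in B$, the multiset of (color, weight) pairs of the already-seen neighbors of $v$ — equivalently the function from the $\le\Delta$ colors hit so far to accumulated weight. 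The number of such functions is at most $(\Delta W+1)^{\Delta}$ per vertex, hence $((\Delta W+1)^{\Delta})^{t} = (\Delta W)^{O(\Delta t)}$ over the whole bag. Combined with the $(t\Delta)^{O(t)}$ choices for $c|_B$ this gives $(t\Delta)^{O(t)}(\Delta W)^{O(\Delta t)}$ states, which is $(\Delta t)^{O(\Delta t)}(\log W)^{O(1)}$ only if the $W$-dependence can be removed. To remove it, I would instead store, for each $v \in B$, only the \emph{colors} of the already-seen neighbors of $v$ (a function from the $\le\Delta$ neighbor-slots of $v$ to $[k]$), not the weights; the weights of arcs incident to $v$ are part of the input and can be looked up, so knowing which neighbor got which color is enough to recompute all the color-class tallies when $v$ is forgotten. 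This yields a state of size $(t\Delta)^{O(t)} \cdot (k^{\Delta})^{t} = (t\Delta)^{O(\Delta t)}$, with no dependence on $W$ at all.

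With the state defined, the DP itself is routine: at a \emph{leaf} bag the table is trivial; at an \emph{introduce} node for a vertex $v$ we branch over the color of $v$ and over which of $v$'s neighbors are already present, updating their neighbor-color records and $v$'s own; at a \emph{forget} node for a vertex $v$ we require that $v$ be stable — we compute, from $v$'s recorded neighbor-colors and the input weights $w(v,\cdot)$, the quantity $\sum_{u\in c^{-1}(c(v))} w(v,u)$ and compare it against $\max_{j\in[k+1]}\sum_{u\in c^{-1}(j)}w(v,u)$, and we discard the state unless stability holds; at a \emph{join} node we merge two partial tables that agree on $c|_B$ by, for each $v\in B$, checking that the two neighbor-color records are consistent and combining them. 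Correctness follows because, by the time $v$ is forgotten, all of $N[v]$ has been processed and $v$'s recorded neighbor-colors equal the true colors under $c$, so the stability check is exactly the Stable $k$-Coloring condition at $v$; applying \cref{lem:stable} then gives the equivalence with Nash Stability. The running time is the number of nodes of the decomposition ($O(n)$ after the standard nice-decomposition construction, each of size polynomial) times the square of the table size (for join nodes) times the $O(\log W)$ cost of arithmetic on the weights, i.e.\ $(\Delta t)^{O(\Delta t)}(n+\log W)^{O(1)}$.

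The main obstacle is exactly the one I addressed above: a straightforward DP that tracks running weight tallies pays a factor polynomial in $W$ inside the exponent of $\Delta t$, which would only give a pseudo-polynomial bound rather than the claimed $(\Delta t)^{O(\Delta t)}(n+\log W)^{O(1)}$. The fix — tracking the \emph{colors} of already-processed neighbors rather than accumulated weights, which is possible precisely because every vertex has at most $\Delta$ neighbors and they all appear in a bag with it, so a ``neighbor-color signature'' of bounded size suffices — is what keeps the weight dependence polynomial and outside the exponent. A secondary point to be careful about is the bookkeeping at introduce/join nodes to ensure each arc $vu$ is accounted for exactly once (when the later of $v,u$ among the already-present vertices sees the other), which is a standard but slightly fiddly aspect of treewidth DPs that must be stated precisely to make the stability check at forget nodes correct.
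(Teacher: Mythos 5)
Your proposal is correct and takes essentially the same route as the paper: reduce to Stable $k$-Coloring with $k=O(t\Delta)$ via \cref{lem:stable}, then a treewidth dynamic program whose states record the colors of roughly $\Delta t$ vertices, giving $\left(\Delta t\right)^{O(\Delta t)}(n+\log W)^{O(1)}$ time with no $W$-dependence in the exponent. The only difference is bookkeeping: the paper runs the standard coloring DP on a tree decomposition of $G^2$ (bags of size $O(t\Delta)$, so a vertex's stability is checked once its whole neighborhood lies in a bag), whereas you keep the width-$t$ decomposition and augment each bag vertex with a neighbor-color signature, checking stability at forget nodes -- an equivalent organization with the same state count.
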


\begin{proof}

Using \cref{lem:stable} we will formulate an algorithm that decides if the
given instance admits a Stable $k$-Coloring for $k=(t+1)\Delta$, since this is
equivalent to deciding if a Nash Stable partition exists. We first obtain a
tree decomposition of $G^2$ by placing into each bag of the given decomposition
all the neighbors of all the vertices of the bag. 

We now execute a standard dynamic programming algorithm for $k$-coloring on
this new decomposition, so we sketch the details. The DP table has size
$k^{(t+1)\Delta} = (\Delta t)^{O(\Delta t)}$ since we need to store as a
signature of a partial solution the colors of all vertices contained in a bag.
The only difference with the standard DP algorithm for coloring is that our
algorithm, whenever a new vertex $v$ is introduced in a bag $B$, considers all
possible colors for $v$, and then for each $u\in B$, if all neighbors of $u$
are contained in $B$, verifies for each signature whether $u$ is stable.
Signatures where a vertex is not stable are discarded. The key property is now
that for any vertex $u$, there exists a bag $B$ such that $B$ contains $u$ and
all its neighbors (since in $G^2$ the neighborhood of $u$ is a clique), hence
only signatures for which all vertices are stable may survive until the root of
the decomposition.  \end{proof}

\subsection{Tight ETH-based Lower Bounds}\label{sec:twdhard}

\begin{theorem}\label{thm:eth1} If the ETH is true, there is no algorithm which
decides if an ASHG on a graph with $n$ vertices, maximum degree $\Delta$, and
pathwidth $p$ admits a Nash Stable partition in time
$(p\Delta)^{o(p\Delta)}(nW)^{O(1)}$, where $W$ is the maximum absolute weight.
\end{theorem}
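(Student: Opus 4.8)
The plan is to prove Theorem~\ref{thm:eth1} by a reduction from \textsc{3-SAT} (or a suitable CSP) that produces ASHG instances whose pathwidth $p$ and maximum degree $\Delta$ satisfy $p\Delta = O(n/\log n)$, so that a hypothetical $(p\Delta)^{o(p\Delta)}(nW)^{O(1)}$ algorithm would solve \textsc{3-SAT} in time $2^{o(n)}$, contradicting the ETH. The central idea, as foreshadowed in the introduction to \cref{sec:twdhard}, is to exploit the fact that the partition of the $\Delta$-element out-neighborhood of a single vertex into coalitions carries roughly $\Delta\log\Delta$ bits of information, since there are about $\Delta^\Delta$ partitions of $\Delta$ elements. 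Thus, along a ``path-like'' gadget of width $p$, the state we can encode at each bag has size roughly $(p\Delta)^{p\Delta}$, and we want to pack the $n$ variables of the \textsc{3-SAT} instance into $O(n/(p\Delta\log(p\Delta)))$ consecutive bags.

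Concretely, I would first design a \emph{selection gadget}: a vertex (or small constant-size cluster) $v$ of degree $\Delta$ whose out-neighbors, in any Nash stable partition, are forced by carefully chosen weights to split into a prescribed pattern of coalitions, where the particular pattern realized encodes a block of $\Theta(\Delta\log\Delta)$ bits of an assignment. The weights here are the delicate part: for each neighbor we need the ``stay vs.\ deviate'' inequalities of Nash stability to pin down exactly which neighbors share a coalition, and we must ensure $W$ stays polynomially bounded (so $\log W = O(\log n)$ and the $(nW)^{O(1)}$ factor is harmless). Next I would build a \emph{propagation/consistency gadget}: a chain of such selection clusters, linked so that each cluster's chosen pattern must be consistent with its neighbor's, which simultaneously keeps pathwidth at $O(p)$ (a standard linear-layout argument, since each cluster only interacts with $O(1)$ neighboring clusters) and keeps degree at $O(\Delta)$. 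Finally I would add \emph{clause-checking gadgets}: for each clause of the formula, a constant-size gadget attached to the relevant coalition-pattern that has a Nash stable configuration if and only if at least one literal is satisfied; the usual trick is to give the clause vertex a small positive pull toward ``satisfied'' states and a negative configuration (no stable choice) when all three literals fail.

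The counting/parameter bookkeeping is where I would be most careful. With $m$ variables per cluster and roughly $\Delta\log\Delta$ bits encodable per cluster, we need $\Theta(n/(\Delta\log\Delta))$ clusters; the pathwidth of the whole construction is then $p = \Theta(\Delta \cdot (\text{cluster size}))$ up to the linking overhead, and one sets the free parameters so that $p\Delta\log(p\Delta) = \Theta(n)$. Then an algorithm running in $(p\Delta)^{o(p\Delta)}(nW)^{O(1)} = 2^{o(p\Delta\log(p\Delta))}\cdot n^{O(1)} = 2^{o(n)}$ would refute the ETH. I would present this as: (i) a lemma establishing the behavior of the selection gadget (a Nash stable restriction exists iff the neighborhood realizes one of the ``legal'' patterns, with an explicit bijection between legal patterns and bit-blocks), (ii) a lemma on the propagation gadget enforcing global consistency of bit-blocks, (iii) a lemma on the clause gadget, (iv) a pathwidth bound via an explicit path decomposition that sweeps cluster by cluster, and finally (v) the parameter-balancing computation yielding the ETH conclusion.

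The main obstacle I anticipate is the weight design for the selection gadget: I need a single degree-$\Delta$ structure in which Nash stability \emph{forces} the out-neighborhood into one of $\approx 2^{\Omega(\Delta\log\Delta)}$ distinguishable partition patterns and \emph{only} those, without the weights growing super-polynomially and without accidentally allowing ``spurious'' stable configurations that decode to garbage. A natural approach is to use a hierarchy of geometrically-scaled weights (powers of a small base) so that each neighbor's deviation incentive is dominated by its ``own level,'' giving a clean combinatorial characterization of stable patterns; but one must check that all the deviation inequalities (including the option of forming a singleton, i.e.\ the comparison with color class $k+1$ in the language of \cref{lem:stable}) are simultaneously satisfiable, and that the clause and propagation gadgets do not perturb these local equilibria. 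Getting a clean, verifiable such gadget — ideally reusing a modular ``coalition identity'' encoding so the consistency checks between adjacent clusters are also local — is the crux of the proof.
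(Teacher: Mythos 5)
You have the right high-level strategy, and it is the one the paper actually uses: encode $\Theta(\Delta\log\Delta)$ bits in the partition of a degree-$\Delta$ neighborhood, propagate the encoded assignment with consistency gadgets, attach clause checkers, and balance parameters so that $p\Delta\log(p\Delta)=\Theta(n)$ (the paper takes $\Delta=\lfloor\sqrt{n}\rfloor$ and pathwidth $O(n/(\Delta\log\Delta))$). But the proposal stops exactly where the proof begins: all the gadget lemmas you list are asserted rather than constructed, and you yourself flag the selection-gadget weight design as an unresolved crux. The two missing ideas are concrete. First, coalition \emph{identities}: to decode ``which coalition did this vertex join'' into bits you need a reference palette, and the paper builds it with a stalker vertex $p$ (plus helper $p'$) that would profitably invade any coalition containing two designated palette vertices, which forces the $\Delta$ palette vertices into pairwise distinct coalitions; each ordinary selection vertex then encodes only $\log\Delta$ bits by choosing which palette class to join, and it is the \emph{consistency} vertices, whose neighborhoods contain $\Delta$ selection vertices, that each carry $\Delta\log\Delta$ bits across a separator. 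Second, the consistency gadget itself: a single vertex $c$ with weights $+4^{i_2}$ to the $i_2$-th selection vertex of one column and $-4^{i_2}$ to the corresponding vertex of the next column, while every selection vertex assigns $c$ weight $-4^{\Delta}$; geometric domination makes $c$'s best deviation strictly positive exactly when the two columns' partition patterns disagree, and its arrival then destabilizes a selection vertex. Without these (or equivalent) constructions and their stability analysis in both directions, what you have is the same plan the paper's overview already sketches, not a proof.

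Two of your specific design choices would also fail as stated. The clause checkers cannot be constant-size: deciding whether the palette class chosen by a selection vertex sets a literal true requires a literal vertex with arcs to all $\Delta$ palette vertices of that column (degree $\Theta(\Delta)$), combined with an OR mechanism (the $s_j,s_j'$ pair in the paper). Your insistence on polynomially bounded weights is unnecessary and in tension with the geometric hierarchy you propose: the $(nW)^{O(1)}$ factor in the theorem tolerates $W=2^{O(\Delta)}=2^{O(\sqrt{n})}$, which is exactly what the $4^{i_2}$-scaled consistency weights produce, whereas it is unclear how a single low-degree vertex could force agreement of $\Delta$-element partition patterns using only polynomial weights. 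Finally, your pathwidth bookkeeping ($p=\Theta(\Delta\cdot\text{cluster size})$ for a chain in which each cluster meets $O(1)$ others) does not produce the needed regime: the whole $n$-bit assignment must cross every separator, so each cut consists of $\Theta(n/(\Delta\log\Delta))$ consistency vertices and the assignment is replicated once per clause, rather than being laid out along a single chain of clusters.
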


\begin{figure}
\input{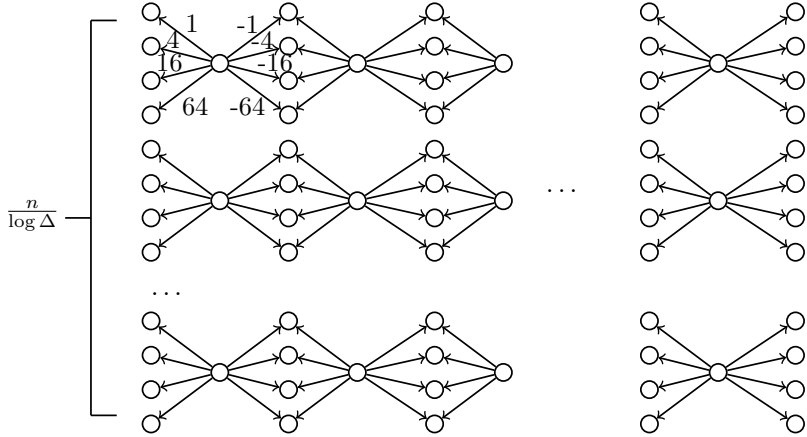} \caption{Overview of the reduction of \cref{thm:eth1}.
Selection vertices form $m$ columns of $n/\log\Delta$ vertices each. An
assignment is encoded by the partition of a column into coalitions. The
$\frac{n}{\Delta\log\Delta}$ consistency vertices that follow a column ensure
that the partition is repeated in the next column, because consistency vertices
are disliked by everyone, so the only way to make the coalition stable is to
make sure they have utility $0$ everywhere. }\label{fig1} \end{figure}

\begin{proof}

We will give a parametric reduction which, starting from a 3-\textsc{SAT}
instance $\phi$ with $n$ variables and $m$ clauses, and for any desired
parameter $\Delta<n/\log n$, constructs an ASHG instance $G$ with the following
properties:

\begin{enumerate} 

\item $G$ can be constructed in time polynomial in $n$ 

\item $G$ has maximum degree $O(\Delta)$ 

\item $G$ has pathwidth $O(\frac{n}{\Delta\log\Delta})$ 

\item the maximum absolute value $W$ is $2^{O(\Delta)}$ 

\item  $\phi$ is satisfiable if and only if there exists a Nash Stable
partition.

\end{enumerate}

Before we go on, let us argue why a reduction that satisfies these properties
does indeed establish the theorem: given a 3-\textsc{SAT} instance on $n$
variables, we set $\Delta=\lfloor\sqrt{n}\rfloor$.  We construct $G$ in
polynomial time, therefore the size of $G$ is polynomially bounded by $n$.
Deciding if $G$ has a Nash Stable partition is equivalent to solving $\phi$ by
the last property. By the third property, the pathwidth of the constructed
graph is $O(\frac{\sqrt{n}}{\log n})$, so $p\Delta = O(\frac{n}{\log n})$.
Furthermore, $W=2^{O(\sqrt{n})}$. If deciding if a Nash Stable partition exists
can be done in time $(p\Delta)^{o(p\Delta)}(|G|\cdot W)^{O(1)}$, the total
running time for deciding $\phi$ is $(p\Delta)^{o(p\Delta)} (|G|\cdot W)^{O(1)}
= 2^{o(n)}$ contradicting the ETH.

We now describe our construction. We are given a 3-\textsc{SAT} instance $\phi$
with variables $x_0,\ldots,x_{n-1}$, and a parameter $\Delta$, which we assume
to be a power of $2$ (otherwise we increase its value by at most a factor of
$2$).  We also assume without loss of generality that all clauses of $\phi$
have size exactly $3$ (otherwise we repeat literals). We construct the
following graph:

\begin{enumerate}

\item \textbf{Selection vertices}: for each $i_1\in \{0,\ldots,\lceil
\frac{n}{\Delta\log\Delta}\rceil\}$, $i_2\in\{0,\ldots,\Delta-1\}$,
$j\in\{1,\ldots,m\}$, we construct a vertex $u_{(i_1,i_2,j)}$.

\item \textbf{Consistency vertices}: for each
$i_1\in\{0,\ldots,\lceil\frac{n}{\Delta\log\Delta}\rceil\}$,
$j\in\{1,\ldots,m-1\}$, we construct a vertex $c_{(i_1,j)}$. For
$i_2\in\{0,\ldots,\Delta-1\}$ we give weights: $w(c_{(i_1,j)},u_{(i_1,i_2,j)})
= 4^{i_2}$; $w(c_{(i_1,j)},u_{(i_1,i_2,j+1)}) = -4^{i_2}$;
$w(u_{(i_1,i_2,j)},c_{(i_1,j)}) = w(u_{(i_1,i_2,j+1)},c_{(i_1,j)}) =
-4^{\Delta}$.

\item \textbf{Clause gadget}: for each $j\in\{1,\ldots,m\}$ we construct two
vertices $s_j, s_j'$ and set $w(s_j,s_j')=2$. We also construct three vertices
$\ell_{(j,1)}, \ell_{(j,2)}, \ell_{(j,3)}$ and set $w(\ell_{(j,1)},s_j) =
w(\ell_{(j,2)},s_j)=w(\ell_{(j,3)},s_j)=2$ and
$w(s_j,\ell_{(j,1)})=w(s_j,\ell_{(j,2)})=w(s_j,\ell_{(j,3)})=-1$. 

\item \textbf{Palette gadget}: we construct a vertex $p$ and a helper $p'$. We
set $w(p,p')=w(p',p)=1$. Furthermore, for
$i_1=\lceil\frac{n}{\Delta\log\Delta}\rceil$ and for all $i_2\in
\{0,\ldots,\Delta-1\}$, we set $w(p,u_{(i_1,i_2,0)})=1$ and
$w(u_{(i_1,i_2,0)},p)=-1$.

\end{enumerate}

So far, we have described the main part of our construction, without yet
specifying how we encode which literals appear in each clause. Before we move
on to describe this part, let us give some intuition about the construction up
to this point. The intended meaning of the palette gadget is that vertices
$u_{(i_1,i_2,0)}$ for $i_1=\lceil\frac{n}{\Delta\log\Delta}\rceil$ and $i_2\in
\{0,\ldots,\Delta-1\}$ should be placed in distinct coalitions ($p$ can be
thought of as a stalker).  These vertices form a ``palette'', in the sense that
every other selection vertex encodes an assignment to some of the variables of
$\phi$ by deciding which of the palette vertices it will join. Hence, we intend
to extract an assignment of $\phi$ from a stable partition by considering each
vertex $u_{(i_1,i_2,0)}$, for $i_1\in \{0,\ldots, \lceil
\frac{n}{\Delta\log\Delta}\rceil-1\}$, $i_2\in \{0,\ldots,\Delta-1\}$.  For
each such vertex we test in which of the $\Delta$ palette partitions the vertex
was placed, and this gives us enough information to encode $\log\Delta$
variables of $\phi$.  Since we have $\lceil\frac{n}{\Delta\log\Delta}\rceil
\cdot \Delta \ge \frac{n}{\log\Delta}$ non-palette selection vertices, and each
such selection vertex encodes $\log\Delta$ variables, we will be able to encode
an assignment to $n$ variables. The role of the consistency vertices is to make
sure that the partition of the selection vertices (and hence, the encoded
assignment) stays consistent throughout our construction.

In order to complete the construction, let us make the above intuition more
formal. For $i_1\in \{0,\ldots,\lceil\frac{n}{\Delta\log \Delta}\rceil-1\}$,
$i_2\in \{0,\ldots,\Delta-1\}$ and for any $j\in\{1,\ldots,m\}$, we will say
that $u_{(i_1,i_2,j)}$ encodes the assignment to variables $x_k$, with $k\in
\{i_1\cdot \Delta\log\Delta +i_2\log\Delta, \cdots, i_1\cdot \Delta\log\Delta
+i_2\log\Delta+\log\Delta-1\}$. Equivalently, given an integer $k$, we can
compute which selection vertices encode the assignment to $x_k$ by setting
$i_1=\lfloor \frac{k}{\Delta\log\Delta}\rfloor$ and $i_2 = \lfloor
\frac{k-i_1\Delta\log \Delta}{\log\Delta} \rfloor$. In that case, $x_k$ is
represented by $u_{(i_1,i_2,j)}$ (for any $j$).

Let us now explain precisely how an assignment to the variables of $\phi$ is
encoded by the placement of selection vertices in coalitions. Let $k$ be such
that $x_k$ is encoded by $u_{(i_1,i_2,j)}$ and let $i_3 = k-
i_1\Delta\log\Delta -i_2\log\Delta$. We have $i_3\in\{0,\ldots,\log\Delta-1\}$.
If $x_k$ is set to True in the assignment, then $u_{(i_1,i_2,j)}$ must be
placed in the same coalition as a palette vertex $u_{\lceil
\frac{n}{\Delta\log\Delta}\rceil, i_2',0}$ where $i_2'$ has the following
property: if we write $i_2'$ in binary, then the bit in position $i_3$ must be
set to $1$. Similarly, if $x_k$ is set to False, then we must place
$u_{(i_1,i_2,j)}$ in the same coalition as a palette vertex $u_{\lceil
\frac{n}{\Delta\log\Delta}\rceil, i_2',0}$ where writing $i_2'$ in binary gives
a $0$ in position $i_3$. Observe that, given an assignment and a vertex
$u_{(i_1,i_2,j)}$ which represents $\log\Delta$ variables, this process fully
specifies the palette vertex with which we must place $u_{(i_1,i_2,j)}$ to
represent the assignment.  In the converse direction, we can extract from the
placement of $u_{(i_1,i_2,j)}$ an assignment to the vertices it represents if
we know that all palette vertices are placed in distinct components, simply by
finding the palette vertex
$u_{(\lceil\frac{n}{\Delta\log\Delta}\rceil,i_2',0)}$ in the coalition of
$u_{(i_1,i_2)}$, writing down $i_2'$ in binary, and using its $\log\Delta$ bits
in order to give an assignment to the $\log\Delta$ variables represented by
$u_{(i_1,i_2,j)}$.

We are now ready to complete the construction by considering each clause. Each
vertex $\ell_{(j,\alpha)}$, $\alpha\in\{1,2,3\}$, corresponds to a literal of
the $j$-th clause of $\phi$.  If this literal involves the variable $x_k$, we
calculate integers $i_1,i_2,i_3$ from $k$ as explained in the previous
paragraph. Say, $x_k$ is the $i_3$-th variable represented by
$u_{(i_1,i_2,j)}$.  We set $w(\ell_{(j,\alpha)},u_{(i_1,i_2,j)})=1$.
Furthermore, for each $i_2'\in\{0,\ldots,\Delta-1\}$ we look at the $i_3$-th
bit of the binary representation of $i_2'$. If setting $x_k$ to the value of
that bit would make the literal represented by $\ell_{(j,\alpha)}$ True, we set
$w(\ell_{(j,\alpha)},u_{(\lceil\frac{n}{\Delta\log\Delta}\rceil,i_2',j)})=1$;
otherwise we set
$w(\ell_{(j,\alpha)},u_{(\lceil\frac{n}{\Delta\log\Delta}\rceil,i_2',j)})=0$.
We perform the above process for all $j\in\{1,\ldots,m\}$,
$\alpha\in\{1,2,3\}$.

Our construction is now complete, so we need to show that we satisfy all the
claimed properties. It is not hard to see that the graph can be built in
polynomial time, and the maximum absolute weight used is $2^{O(\Delta)}$ (on
arcs incident on some consistency vertices). The vertices with maximum degree
are the consistency vertices and the vertices representing literals, both of
which have degree $O(\Delta)$.

To establish the bound on the pathwidth we first delete $p,p'$ from the graph,
as this can decrease pathwidth by at most $2$. Now observe that, for each $j$,
the set $C_j= \{ c_{(i_1,j)}\ |\
i_1\in\{0,\ldots,\lceil\frac{n}{\Delta\log\Delta}\rceil\}\ \}$ is a separator
of the graph. We claim that if we fix a $j$, then the set $C_j\cup C_{j+1}$
separates the set $C'_j = \{ u_{(i_1,i_2,j)}\ |
i_1\in\{0,\ldots,\lceil\frac{n}{\Delta\log\Delta}\rceil\},
i_2\in\{0,\ldots,\Delta-1\}\ \} \cup \{s_j, s_j', \ell_{(j,1)}, \ell_{(j,2)},
\ell_{(j,3)}\}$ from the rest of the graph. We claim that we can calculate a
path decomposition of the graph induced by $C_j\cup C'_j\cup C_{j+1}$ with
width $O(\frac{n}{\Delta\log\Delta})$ such that the first bag contains $C_j$
and the last bag contains $C_{j+1}$. If we achieve this we can construct a path
decomposition of the whole graph by gluing these decompositions together in the
obvious way (in order of increasing $j$). However, a path decomposition of this
induced subgraph can be constructed by placing $C_j\cup C_{j+1}\cup \{s_j,
s_j', \ell_{(j,1)}, \ell_{(j,2)}, \ell_{(j,3)}\}$ and a distinct vertex of the
remainder of $C'_j$ in each bag. This decomposition has width $2|C_j|+O(1) =
O(\frac{n}{\Delta\log\Delta})$.

Finally, let us establish the main property of the construction, namely that
$\phi$ is satisfiable if and only if the ASHG instance admits a Nash Stable
partition. If there exists a satisfying assignment to $\phi$ we construct a
partition as follows: (i) $p,p'$ are in their own coalition (ii) each
consistency vertex is a singleton (iii) for $i_2\in \{0,\ldots,\Delta-1\}$, the
vertices of $\{ u_{\lceil\frac{n}{\Delta\log\Delta}\rceil,i_2,j}\ |\
j\in\{1,\ldots,m\}\}$ are placed in a distinct coalition (iv) we place the
remaining selection vertices in one of the previous $\Delta$ coalitions in a
way that represents the assignment as previously explained (v) for each
$j\in\{1,\ldots,m\}$ the $j$-th clause contains a True literal; we place the
corresponding vertex $\ell_{(j,\alpha)}$ together with its out-neighbor in the
selection vertices, and the remaining literal vertices together with $s,s'$ in
a new coalition. We claim that this partition is Nash Stable. We have the
following argument: (i) $p'$ is with $p$, while $p$ cannot increase her utility
by leaving $p'$, since all its other out-neighbors are in distinct coalitions
(ii) for each $i_1,i_2,j$, the vertices $u_{(i_1,i_2,j)}, u_{(i_1,i_2,j+1)}$
are in the same coalition. Hence, the utility of each consistency vertex is $0$
in any coalition, and such vertices are stable as singletons (iii) each
selection vertex $u_{(i_1,i_2,j)}$ has utility $0$, and such vertices only have
out-going arcs of negative weight (iv) in each clause gadget we have a
coalition with $s_j, s_j'$ together with two literal vertices, say
$\ell_{(j,1)}, \ell_{(j,2)}$; no vertex has incentive to leave this coalition
(v) finally, for literal vertices $\ell_{(j,\alpha)}$ which we placed together
with a selection vertex, we observe that if the assignment sets the
corresponding literal to True, the selection vertex that is an out-neighbor of
$\ell_{(j,\alpha)}$ must have been placed in a coalition that contains a
palette vertex towards which $\ell_{(j,\alpha)}$ has positive utility, hence
the utility of $\ell_{(j,\alpha)}$ is $2$ and this vertex is stable.

For the converse direction, suppose that we have a Nash Stable partition $\pi$.
We first prove that all vertices
$u_{\lceil\frac{n}{\Delta\log\Delta}\rceil,i_2,0}$, for
$i_2\in\{0,\ldots,\Delta-1\}$, must be in distinct coalitions. Indeed, if two
of them are in the same coalition, $p$ will have incentive to join the
coalition that has the maximum number of such vertices. However, once $p$ joins
such a coalition, these vertices will have negative utility, contradicting
stability. Second, we prove that for each $i_1,i_2,j$, the vertices
$u_{(i_1,i_2,j)}, u_{(i_1,i_2,j+1)}$ must be in the same coalition. If not,
consider two such vertices which are in distinct coalitions and maximize $i_2$.
We claim that in this case $c_{(i_1,j)}$ will always join $u_{(i_1,i_2,j)}$.
Indeed, from the selection of $i_2$, we have that for $i_2'>i_2$, the
contribution of arcs with absolute weight $4^{i_2'}$ to the utility of
$c_{(i_1,j)}$ cancels out; while for $i_2'<i_2$ the sum of all absolute
utilities of arcs with weights $4^{i_2'}$ is too low to affect the placement of
$c_{(i_1,j)}$ (in particular, $4^{i_2}-\sum_{j<i_2} 4^j > \sum_{j<i_2} 4^j$).
But, if $c_{(i_1,j)}$ joins such a coalition, a selection vertex has negative
utility, contradicting stability.

From the two properties above we can now extract an assignment to $\phi$. For
each selection vertex $u_{(i_1,i_2,j)}$, if this vertex is in the same
coalition as $u_{(\lceil\frac{n}{\Delta\log\Delta}\rceil,i_2',0)}$, we give an
assignment to the variables represented by $u_{(i_1,i_2,j)}$ as described, that
is, we write $i_2'$ in binary and use one bit for each variable. Note that the
choice of $j$ here is irrelevant, as we have shown that thanks to the
consistency vertices, for each $i_1,i_2$, all vertices $u_{(i_1,i_2,j)}$ are in
the same coalition. If $u_{(i_1,i_2,j)}$ is not in the same coalition as any
$u_{(\lceil\frac{n}{\Delta\log\Delta}\rceil,i_2',0)}$, we set its corresponding
variables in an arbitrary way. To see that this assignment satisfies clause
$j$, consider $s_j$, which, without loss of generality is placed with $s_j'$.
If three of the vertices $\ell_{(j,1)}, \ell_{(j,2)}, \ell_{(j,3)}$ are in the
same coalition as $s_j$, then $s_j$ has negative utility, contradiction. Hence,
one of these vertices, say $\ell_{(j,1)}$, is in another coalition. But then,
since the neighbors of this vertex among vertices
$u_{(\lceil\frac{n}{\Delta\log\Delta}\rceil,i_2,j)}$ are all in distinct
coalitions, $\ell_{(j,1)}$ is in the same coalition with one such vertex and
its out-neighbor selection vertex. But this means that we have extracted an
assignment from the corresponding vertex and that this assignment sets the
corresponding literal to True, satisfying the clause.  \end{proof}

\begin{corollary}\label{cor:eth1} If the ETH is true, there is no algorithm
which decides if an ASHG on a graph with $n$ vertices, maximum degree $\Delta$,
and constant pathwidth admits a Nash Stable partition in time
$\Delta^{o(\Delta)}(nW)^{O(1)}$, where $W$ is the maximum absolute weight.
\end{corollary}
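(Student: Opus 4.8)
The plan is to re-use, essentially verbatim, the reduction constructed in the proof of \cref{thm:eth1}, instantiating its free parameter differently. Recall that that reduction takes a 3-\textsc{SAT} formula $\phi$ on $n$ variables (and $m$ clauses) together with any chosen value $\Delta<n/\log n$, which we may assume to be a power of $2$, and produces in polynomial time an equivalent ASHG instance $G$ with maximum degree $O(\Delta)$, pathwidth $O(\frac{n}{\Delta\log\Delta})$, and maximum absolute weight $W=2^{O(\Delta)}$. In \cref{thm:eth1} one takes $\Delta=\lfloor\sqrt{n}\rfloor$ to balance the two parameters; here instead I would push $\Delta$ to be as large as the construction permits.

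Concretely, I would set $\Delta$ to be the largest power of $2$ not exceeding $\frac{n}{2\log n}$, so that $\Delta<n/\log n$ (hence the reduction applies) and $\Delta=\Theta(n/\log n)$. The point of this choice is that $\Delta\log\Delta=\Theta(n)$, so the pathwidth bound $O(\frac{n}{\Delta\log\Delta})$ collapses to an absolute constant, which is exactly the ``constant pathwidth'' promised by the statement. Meanwhile the maximum degree of the produced graph is $\Theta(n/\log n)$, its maximum absolute weight is $2^{O(n/\log n)}$, and it has $\mathrm{poly}(n)$ vertices.

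Then I would argue the contradiction: suppose some algorithm decides Nash Stability on constant-pathwidth graphs in time $\Delta^{o(\Delta)}(nW)^{O(1)}$ (with $\Delta,n,W$ the degree, size and maximum weight of the instance). Running it on $G$ decides $\phi$. Since the degree $\Delta=\Theta(n/\log n)$ satisfies $\log\Delta=O(\log n)$, we get $\Delta^{o(\Delta)}=2^{o(\Delta)\log\Delta}=2^{o(n/\log n)\cdot O(\log n)}=2^{o(n)}$, and $(nW)^{O(1)}=(\mathrm{poly}(n)\cdot 2^{O(n/\log n)})^{O(1)}=2^{o(n)}$, so $\phi$ is decided in time $2^{o(n)}$, contradicting the ETH. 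There is no genuinely new obstacle here, since the construction and its correctness are already established in \cref{thm:eth1}; the only thing requiring care is checking that the single parameter choice $\Delta=\Theta(n/\log n)$ simultaneously makes the pathwidth constant (it needs $\Delta\log\Delta=\Omega(n)$), keeps $\Delta$ a valid power-of-two parameter below $n/\log n$, and keeps both $\Delta^{o(\Delta)}$ and $2^{O(\Delta)}$ subexponential in $n$ — which it does, comfortably, because $\Delta=\Theta(n/\log n)$ sits in the window where $\log\Delta=\Theta(\log n)$.
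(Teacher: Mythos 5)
Your proposal is correct and matches the paper's own proof of \cref{cor:eth1}: the paper likewise reuses the reduction of \cref{thm:eth1} with $\Delta=\lfloor\frac{n}{2\log n}\rfloor$, obtaining constant pathwidth, degree $O(n/\log n)$, weight $2^{O(n/\log n)}$, and a $2^{o(n)}$ algorithm for \textsc{3-SAT} from the assumed running time. Your extra care about rounding $\Delta$ to a power of two is harmless, since the construction already accommodates that adjustment.
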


\begin{proof}

We use the same reduction as in \cref{thm:eth1}, from a \textsc{3-SAT} formula
on $n$ variables, but set $\Delta = \lfloor \frac{n}{2\log n}\rfloor$.
According to the properties of the construction, the pathwidth of the resulting
graph is $O(\frac{n}{\Delta\log\Delta}) = O(1)$, the maximum degree is
$O(n/\log n)$, the maximum weight is $2^{O(n/\log n)}$ and the size of the
constructed graph is polynomial in $n$. If there exists an algorithm for
finding a Nash Stable partition in the stated time, this gives a $2^{o(n)}$
algorithm for \textsc{3-SAT}.  \end{proof}

\begin{theorem}\label{thm:eth2} If the ETH is true, there is no algorithm which
decides if an ASHG on a graph with $n$ vertices, constant maximum degree
$\Delta$, and pathwidth $p$ admits a Nash Stable partition in time
$p^{o(p)}n^{O(1)}$, even if all weights have absolute value $O(1)$.
\end{theorem}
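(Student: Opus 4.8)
The plan is to reduce from \textsc{3-SAT} on $n$ variables, keeping the column/consistency/clause skeleton of \cref{thm:eth1} but replacing its degree-$\Theta(\Delta)$ compression (``$\Delta\log\Delta$ bits per neighbourhood'', which collapses when $\Delta=O(1)$) by a compression that instead uses a super-constant number of global coalitions. Fix $p=\Theta(n/\log n)$, a power of two with $p\log p\ge n$, and partition the variables of $\phi$ into $p$ blocks of $\log p$ variables each, so that a partial assignment to a block is a ``colour'' in $[p]$. The ASHG we build has $m$ clause columns placed consecutively along a path, together with $p$ \emph{rail} gadgets --- long unit-weight paths, subdivided enough that every vertex has $O(1)$ degree --- that traverse all columns; these are the $p$ coalitions that ``run through the graph''. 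For every block $g$ and column $j$ there is a register vertex $v^{(j)}_g$ at the root of a \emph{selector funnel}: a complete binary tree of depth $\log p$ whose $p$ leaves are attached one-to-one to the $p$ rails and each of whose internal nodes $u$ carries weight $1$ on the arcs to its two children, so that $v^{(j)}_g$ is forced to follow a unique root-to-leaf path and land in one rail-coalition, encoding block $g$'s colour. Consecutive registers $v^{(j)}_g,v^{(j+1)}_g$ are glued by a bounded-degree consistency vertex $c^{(j)}_g$ with $w(c^{(j)}_g,v^{(j)}_g)=1$, $w(c^{(j)}_g,v^{(j+1)}_g)=-1$ and a large (but still $O(1)$) negative weight on the reverse arcs, which, exactly as the consistency vertices of \cref{thm:eth1}, forces $v^{(j)}_g$ and $v^{(j+1)}_g$ into the same coalition. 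Finally each column $j$ carries a clause gadget in the style of \cref{thm:eth1} (vertices $s_j,s'_j$ and three literal vertices), in which each literal vertex is joined, through a small \emph{checker funnel} whose leaves touch only the rails consistent with that literal, to the register of the block holding its variable; this gadget is stable iff at least one literal of $C_j$ is satisfied by the colours currently carried by the three relevant registers.

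Granting the key structural fact that in every Nash stable partition the $p$ rails occupy $p$ \emph{distinct} coalitions, the rest is routine. The graph has $\mathrm{poly}(n)$ vertices, maximum degree $O(1)$, and maximum absolute weight $O(1)$; its pathwidth is $O(p)$, as seen by sweeping the columns left to right while keeping in each bag the $O(p)$ rail vertices crossing the current cut, plus the $O(1)$ vertices of whichever funnel, consistency gadget, or clause gadget is being processed (handling the funnels, and the at most three registers read by a clause gadget, one at a time). For completeness, a satisfying assignment of $\phi$ yields a stable partition: put each rail in its own coalition, route every selector funnel according to the assignment, make the consistency vertices singletons, and configure each clause gadget as in \cref{thm:eth1}. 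For soundness, from a stable partition the consistency vertices force each block's register to carry a single colour across all columns; since the leaves of a selector funnel lie in pairwise distinct rails, its internal nodes route unambiguously, so we can read a full assignment off the registers, and stability of the clause gadget at column $j$ forces this assignment to satisfy $C_j$. Hence an algorithm running in $p^{o(p)}n^{O(1)}$ would decide $\phi$ in time $2^{o(p\log p)}\mathrm{poly}(n)=2^{o(n)}$, contradicting the ETH.

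The main obstacle is exactly the structural fact invoked above: forcing the $p$ rails into distinct coalitions using only bounded degree. In \cref{thm:eth1} this is free via a single stalker of degree $\Delta+1$ that every palette vertex dislikes, but with $\Delta=O(1)$ no such vertex is allowed, and in ASHGs there is a real asymmetry --- a stalker can easily force two vertices to be \emph{together} (it would follow whichever it can reach), but forcing two vertices \emph{apart} is delicate, because a vertex that likes another will chase it into its coalition. The plan for this step is to enforce distinctness incrementally, coupling the rails to a depth-$O(\log p)$ tree of bounded-degree separator gadgets that forces, level by level, each selector-funnel node to agree with exactly one child, and to argue that any violation either already breaks stability locally or is ruled out by the large negative weights on the consistency vertices; making this airtight is where the work lies, and it is also the feature that, as explained in \cref{sec:twd}, any correct DP must pay for. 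Finally, following an argument of \cite{Peters16a}, one can additionally guarantee that every Nash stable partition of the constructed instance is connected --- by inserting a few unit-weight arcs that do not disturb any gadget --- so that the bound also holds for \textsc{Connected Nash Stability}, which is precisely what lets \cref{thm:eth2} refute both algorithms claimed in \cite{Peters16a}.
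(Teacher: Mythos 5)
Your overall strategy matches the paper's: reduce from \textsc{3-SAT}, take $p=\Theta(n/\log n)$ global ``rail'' coalitions running through $m$ clause columns, encode $\Theta(\log p)$ variables per block by which rail-coalition a selector joins, and close with the arithmetic $p^{o(p)}n^{O(1)}=2^{o(n)}$. But the proof has a genuine gap, and it is exactly the step you flag yourself and defer: forcing the $p$ rails into pairwise \emph{distinct} coalitions using only $O(1)$ degree and $O(1)$ weights. Without this, nothing you wrote prevents all rails (or the whole funnel) from collapsing into one coalition, and then the ``colour'' read off a register is meaningless; the soundness direction does not go through. The paper resolves this with a concrete device you do not supply: for every pair of palette paths it attaches, at a \emph{distinct} column index, a two-vertex gadget $(a_j,b_j)$ where $b_j$ must end up a singleton and $a_j$, which likes $b_j$, can only reach utility $\ge 1$ if its two palette neighbours lie in different coalitions — i.e.\ it forces vertices \emph{apart} via the threat of joining a singleton, and bounded degree is preserved by spreading the $\binom{\sqrt n}{2}$ pairwise checks along the length of the paths (this is why the paths have length $m+n$ rather than $m$). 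Your alternative (``a depth-$O(\log p)$ tree of separator gadgets'') is only a plan, and your selector-funnel claim that a node with weight-$1$ arcs to both children ``is forced to follow a unique root-to-leaf path'' is false as stated: such a node maximizes utility by sitting with both children, so uniqueness of the routing needs an argument or different weights.

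A second, related gap is clause verification at bounded degree. You reuse the clause gadget of \cref{thm:eth1} (vertices $s_j,s_j'$ and three literal vertices) together with a ``small checker funnel whose leaves touch only the rails consistent with that literal'', but a literal is consistent with $\Theta(p)$ colours, so this funnel is not small, and the mechanism by which it detects, with constant weights and constant degree, that the relevant register shares a coalition with \emph{some} consistent rail is never specified; in \cref{thm:eth1} this detection relies precisely on literal vertices of degree $\Theta(\Delta)$, which is what you are not allowed here. The paper instead builds, per clause, up to $3\sqrt n$ constant-degree checker vertices (one per literal/colour pair, each wired to one rail vertex and one indegree-reduction path vertex forced into the selector's coalition) and chains them through an Or gadget $r_1,\dots,r_{|C_j|}$ with weights $2/-1$ and pendant vertices $r_k',r_k''$, whose stability analysis shows that if \emph{no} checker is satisfied the chain has no stable configuration. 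Both of these gadgets are the actual content of the lower bound for bounded degree; as written, your proposal presupposes them rather than proving them. (Minor point: your final remark that connectivity can be handled ``by inserting a few unit-weight arcs'' also differs from what is needed — the paper's \cref{cor:eth2} adds weight-$0$ arcs between all vertices at distance two and argues any stable partition can be split into a connected one.)
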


\begin{figure}[h]
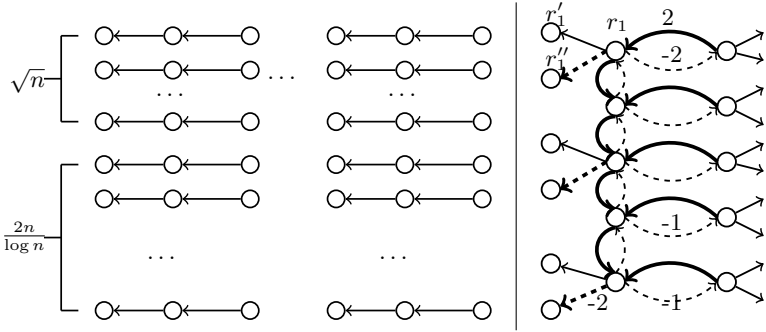
 \begin{tabular}{l|l} \input{redPw1.tex} & \input{redPw2.tex}
\end{tabular} \caption{On the left, an overview of the reduction of
\cref{thm:eth2}. We have $n+m$ columns, each with $\sqrt{n}$ palette vertices
and $\frac{2n}{\log n}$ selection vertices. Assignments are encoded by the
placement of selection vertices in coalitions.  On the right, an OR gadget,
where the right-most part depicts the checker vertices. Such a vertex is
satisfied if its two out-going arcs going to the rest of the graph lead to the
same coalition.  Otherwise, the checker joins the Or gadget. On the left, the
vertices of the Or gadget starting from $r_1$ at the top. Each $r_i$ has
utility $2$ for $r_{i+1}$ but utility $-1$ for $r_{i-1}$. Each $r_i$ has two
vertices attached, one that it likes ($r_i'$) and one that it dislikes
($r_i''$). If the checker attached to $r_{k_0}$ joins the rest of the graph, we
place $r_{k_0}''$ with $r_{k_0+1}$ and continue in this way to obtain a stable
partition of the Or gadget.}\label{fig2} \end{figure}

\begin{proof}

We describe a reduction from a \textsc{3-SAT} formula $\phi$ with $n$ variables
and $m$ clauses. Our goal is to build an equivalent instance with bounded
maximum degree, bounded  maximum weight, and pathwidth $O(n/\log n)$. Suppose
without loss of generality that $n$ is a power of $4$ (otherwise add some dummy
variables), and the variables of $\phi$ are $x_0,x_1,\ldots, x_{n-1}$. We
construct a graph initially made up of the following parts:

\begin{enumerate}

\item \textbf{Palette Paths}: For $i\in\{0,\ldots,\sqrt{n}-1\}$,
$j\in\{1,\ldots,m+n\}$, we construct a vertex $p_{(i,j)}$. For
$j\in\{1,\ldots,m+n-1\}$ we set $w(p_{(i,j+1)}, p_{(i,j)}) = 1$.

\item \textbf{Selection Paths}: For $i\in \{0,\ldots, \lfloor\frac{2n}{\log
n}\rfloor\}$, $j\in \{1,\ldots,m+n\}$, we construct a vertex $u_{(i,j)}$.  For
$i\in \{0,\ldots, \lfloor\frac{2n}{\log n}\rfloor\}$, $j\in \{1,\ldots,m+n-1\}$
we set $w(u_{(i,j+1)}, u_{(i,j)})=1$.

\item \textbf{Palette Consistency Gadget}: For each pair of indices $i,i' \in
\{1,\ldots,\sqrt{n}\}$, with $i\neq i'$, we arbitrarily select a distinct index
$j\in \{m+1,\ldots,m+n-1\}$. We construct two vertices $a_j, b_j$ and set
$w(a_j, p_{(i,j)}) = 1$, $w(a_j, p_{(i',j)}) = -1$, $w(a_j,b_j)=1$,
$w(b_j,a_j)=w(b_j,p_{(i,j)})=w(b_j,p_{(i',j)})=-1$.

\end{enumerate}

At this point we have described the skeleton of our construction which will be
sufficient to encode the variables of the original formula and their
assignments.  Before we proceed to explain how we complete the construction to
encode the clauses, we give some intuition. The $\sqrt{n}$ palette paths and
the roughly $2n/\log n$ selection paths are intended to form coalitions, in the
sense that for a fixed $i$, all vertices $p_{(i,j)}$ must belong in the same
coalition, and similarly for all vertices of $u_{(i,j)}$. To ensure this, we
will make sure that vertices $p_{(i,j)}, u_{(i,j)}$ have no other out-going
arcs in our construction, hence each such vertex will always have an incentive
to join its immediate neighbor in the path. The palette consistency gadgets
will make sure that the $\sqrt{n}$ palette paths form $\sqrt{n}$ distinct
coalitions.

Armed with this intuition, we now explain how assignments will be encoded.
Assuming the $\sqrt{n}$ palette paths form distinct coalitions, we can decide
to place $u_{(i,1)}$ (and its corresponding selection path) inside any one of
these $\sqrt{n}$ coalitions. This choice encodes $\log(\sqrt{n})=\frac{\log
n}{2}$ bits of information (which is an integer, because $n$ is a power of
$4$). Hence, we define that the placement of $u_{(i,1)}$ encodes the assignment
of variables $x_k$ for $k\in\{\frac{i\log n}{2},\ldots, \frac{(i+1)\log
n}{2}-1\}$. Equivalently, given $k$, we say that the assignment of $x_k$ is
encoded by the placement of the vertex $u_{(i,1)}$, where $i=\lfloor
\frac{2k}{\log n}\rfloor$. To be more precise we will make the following
correspondence: the placement of $u_{(i,1)}$ dictates that $x_k$ is set to True
if $i=\lfloor \frac{2k}{\log n}\rfloor$, $u_{(i,1)}$ is in the same coalition
as $p_{(i',1)}$, and the binary representation of $i'$ using $\frac{\log n}{2}$
bits has a $1$ at position $k-\frac{i\log n}{2}$ (where we number positions in
the binary representation starting from $0$); otherwise the placement of
$u_{(i,1)}$ dictates that $x_k$ is set to False. It is easy to also make this
correspondence in the opposite direction: if we have an assignment to the
variables represented by $u_{(i,1)}$, we write these variables in binary in
order of increasing index and let $i'$ be the resulting number. We place
$u_{(i,1)}$ together with $p_{(i',1)}$.

Now that we have explained our intended encoding of the variable assignments we
can complete the construction. Fix a $j\in\{1,\ldots,m\}$ and consider the
$j$-th clause of $\phi$ which, without loss of generality, contains three
literals (if not, we can repeat literals). Suppose the three (not necessarily
distinct) variables involved in the clause are $x_{k_1}, x_{k_2}, x_{k_3}$, and
$i_1=\lfloor \frac{2k_1}{\log n}\rfloor$ (and $i_2,i_3$ are defined similarly).
We construct the following gadgets:

\begin{enumerate}

\item \textbf{Indegree reduction}: Construct three directed paths of length
$\sqrt{n}$.  Label their vertices $\ell_{(j,\alpha,\beta)}$, for
$\alpha\in\{1,2,3\}$ and $\beta\in\{0,\ldots,\sqrt{n}-1\}$. For all
$\alpha\in\{1,2,3\}$ and $\beta\in\{0,\ldots,\sqrt{n}-2\}$ we set
$w(\ell_{(j,\alpha,\beta)}, \ell_{(j,\alpha,\beta+1)})=1$. We also set
$w(\ell_{(j,\alpha,\sqrt{n})},u_{(i_{\alpha},j)})=1$. 

\item \textbf{Checker vertices}: For each $\alpha\in\{1,2,3\}$ we do the
following: for each $i'\in\{0,\ldots,\sqrt{n}-1\}$ we consider whether the
assignment encoded by placing $u_{(i_{\alpha},j)}$ in the coalition of
$p_{(i',j)}$ would satisfy the literal involving $x_{k_{\alpha}}$ (i.e.
whether the binary representation of $i'$ has a $1$ at position
$k_{\alpha}-\frac{i_{\alpha}\log n}{2}$ if the literal is positive, and $0$ if
the literal is negated).  If yes, we construct a checker vertex
$c_{(j,\alpha,i')}$ and set $w(c_{(j,\alpha,i')},p_{(i',j)}) =
w(c_{(j,\alpha,i')},\ell_{(j,\alpha,i')}) =1$. Let $C_j$ be the set containing
all checker vertices we constructed in this step for a given $j$ (for all
$\alpha\in\{1,2,3\}$ and $i'\in\{0,\ldots,\sqrt{n}-1\}$). We have $|C_j|\le
3\sqrt{n}$.

\item \textbf{Or gadget}: We construct for each $k\in\{1,\ldots,|C_j|\}$ three
vertices $r_k, r_k', r_k''$ and set for all $k$, $w(r_k, r_k')=1$, and
$w(r_k,r_k'')=-2$. Furthermore, for all $k\in\{1,\ldots, |C_j|-1\}$ we set
$w(r_k, r_{k+1}) = 2$ and $w(r_{k+1},r_k)=-1$. For each
$k\in\{2,\ldots,|C_j|\}$ we pick a distinct vertex $c\in C_j$ and set
$w(p_k,c)=-1$ and $w(c,p_k)=2$. Finally, for the remaining vertex $c$ of $C_j$
we set $w(r_1,c)=-2$ and $w(c,r_1)=2$.

\end{enumerate}

The construction described above is repeated for each $j\in\{1,\ldots,m\}$, in
order to encode all $m$ clauses of the instance. Let us give some intuition:
first, the indegree reduction paths are not particularly important; all
vertices $\ell_{(j,\alpha,\beta)}$ are intended to belong in the coalition of
$u_{(i_{\alpha},j)}$, and their role is only to allow us to avoid giving this
vertex large in-degree (we re-route arcs that would have gone to
$u_{(i_{\alpha},j)}$ towards distinct vertices of the path). The checker
vertices play the following role: if the encoded assignment sets a literal to
True, then one of the checkers will have utility $2$ by joining the coalition
of a vertex $u_{(i_{\alpha},j)}$. In this case we say that this checker is
``satisfied''. Other checkers will join the coalition of their out-neighbor in
the Or gadget. Hence, the role of the Or gadget is to make sure that at least
one checker vertex must be satisfied to obtain a stable partition.

Let us now prove that our construction has all the necessary properties. First,
it is not hard to see that the maximum degree $\Delta$ and maximum absolute
weight $W$ are bounded by a constant. We claim that the pathwidth of our
construction is $O(n/\log n)$. To see this, let $B_j=\{ u_{(i,j)}\ |\
i\in\{0,\ldots,\lfloor\frac{2n}{\log n}\rfloor \}\} \cup \{ p_{(i,j)}\ |\
i\in\{0,\ldots,\sqrt{n}-1\} \}$. We construct a path decomposition using
$n+m-1$ bags, where for $j\in\{1,\ldots,n+m-1\}$, the $j$-th bag contains
$B_j\cup B_{j+1}$. This decomposition has width $O(n/\log n)$ and already
covers all palette and selection vertices and their induced edges. To complete
the decomposition, for each $j\in\{1,\ldots,m\}$, we add to the $j$-th bag all
the (at most $O(\sqrt{n})$) vertices we constructed to represent clause $j$
(that is, the Or gadget, checkers, and indegree reduction vertices for clause
$j$). Furthermore, for $j\in\{m+1,\ldots,m+n-1\}$, we add to the $j$-th bag the
palette consistency vertices $a_j,b_j$, if they exist. We obtain a
decomposition of width $O(n/\log n)$. Hence, if we prove that the new instance
has a Nash Stable partition if and only if $\phi$ is satisfiable, we are done.
Indeed, in that case an algorithm with running time $p^{o(p)}n^{O(1)}$ would
run in $(n/\log n)^{o(n/\log n)} = 2^{o(n)}$ and would refute the ETH.

What remains then is to prove that $\phi$ is satisfiable if and only if the
ASHG instance we constructed has a stable partition. For the forward direction,
suppose there exists a satisfying assignment. We construct a stable partition
as follows: initially, for each $i\in\{0,\ldots,\sqrt{n}-1\}$, each palette
path $P_i = \{ p_{(i,j)}\ |\ j\in\{1,\ldots,m+n\}\}$ forms its own coalition;
furthermore for each $i\in\{0,\ldots,\lfloor\frac{2n}{\log n}\rfloor\}$, all
vertices of the set $\{ u_{(i,j)}\ |\ j\in\{1,\ldots,m+n\}\}$ are placed in
$P_{i'}$, where $i'$ is obtained by writing the assignments to the variables
$x_k$ for $k\in\{\frac{i\log n}{2}, \frac{i\log n}{2}+1,\ldots,\frac{(i+1)\log
n}{2}-1\}$ and reading it as a binary number. Observe that all vertices
described so far are stable. For palette consistency vertices $a_j,b_j$, we
place $b_j$ as a singleton (which is stable), and $a_j$ together with its
out-neighbor in the palette vertices that gives it positive utility. This is
always possible, since each $P_i$ is in a distinct coalition. For the clause
gadgets, fix a $j$, and place all vertices $\ell_{(j,\alpha,\beta)}$ in the
same coalition as $u_{(i_{\alpha},j)}$. This is stable for these vertices (and
indifferent for $u_{(i_{\alpha},j)}$). Because we have a satisfying assignment,
there is a literal that is set to True, say the literal involving variable
$x_{k_\alpha}$. This implies that there exists $i'$ and checker vertex
$c_{(j,\alpha,i')}$ such that the checker has positive utility for $p_{(i',j)}$
and $\ell_{(j,\alpha,i')}$, and the latter two vertices are in the same
coalition. We place the checker in this coalition, where it receives utility
$2$ and is therefore stable. For each other checker  $c\in C_j$, we place $c$
together with its out-neighbor in the Or gadget, making $c$ stable. Finally,
there exists a $k_0\in \{1,\ldots,|C_j|\}$ such that the neighbor of $r_{k_0}$
in $C_j$ is not placed together with $r_{k_0}$. We place vertices of the Or
gadget in coalitions as follows: for $k\in \{1,\ldots,k_0-1\}$ we place
$r_k,r_k'$ together with $r_{k+1}$, and $r_k''$ as a singleton; for
$k\in\{k_0,\ldots,|C_j|-1\}$ we place $r_k$ together with $r_k'$ and place
$r_k''$ together with $r_{k+1}$; finally, $r_{|C_j|}$ is placed with $r_k'$.
This partition is stable because for $k<k_0$ the vertex $r_k$ receives utility
$2$ from its arc towards $r_{k+1}$ and $1$ from $r_k'$; $r_{k_0}$ receives at
most $-1$ from $r_{k_0-1}$ (if $k_0>1$) but also $1$ from $r_{k_0}'$, so its
utility is not negative; furthermore, since $r_{k_0}'',r_{k_0+1}$ are together
$r_{k_0}$ cannot increase its utility by switching; the same arguments apply
for $|C_j|>k>k_0$ while for $r_{|C_j|}$ its utility is also non-negative and
this vertex is stable.

For the converse direction, suppose that there exists a stable partition $\pi$.
We first observe that for all $i\in\{0,\ldots,\sqrt{n}-1\}$, $P_i$ is contained
in a coalition, otherwise, there would be a $p_{(i,j+1)}$ in a coalition
distinct from that of $p_{(i,j)}$, but then the former vertex would have
incentive to deviate. Furthermore, for $i\neq i'$, $P_i, P_{i'}$ are contained
in distinct coalitions. To see this, consider the palette consistency gadget
$a_j,b_j$ we constructed for the pair $i,i'$. The vertex $b_j$ has to be a
singleton (placing it together with one of its neighbors gives it negative
utility). Therefore, $a_j$ must receive positive utility in another coalition.
However, this would be impossible if the neighbors of $a_{j}$ in $P_i, P_{i'}$
were in the same coalition. We also observe that, for
$i\in\{0,\ldots,\lfloor\frac{2n}{\log n}\rfloor\}$ the vertices of the $i$-th
selection path belong in the same coalition (with arguments similar to those
for $P_i$). Hence, from this placement we extract an assignment for $\phi$. If
the vertex $u_{(i,1)}$ is placed together with $p_{(i',1)}$, we write $i'$ in
binary and use the bits to give values to the variables $x_k$ for $k\in\{
\frac{i\log n}{2},\ldots,\frac{(i+1)\log n}{2}-1\}$. If $u_{(i,1)}$ is not
together with any palette vertex, we set these variables arbitrarily.

We claim that the assignment we have extracted satisfies $\phi$. To see this,
consider the $j$-th clause. By arguments similar as above, all vertices of the
path $\ell_{(j,\alpha,\beta)}$ are placed together with $u_{(i_{\alpha},j)}$,
because each such vertex only has one out-going arc, and this arc has positive
weight.  We observe that if one of the checker vertices of $c_j$ is satisfied,
that is, if $c_j$ is placed in a coalition that does not contain its neighbor
in the Or gadget, the utility of $c_j$ in its current coalition must be $2$,
because checker vertices only have three out-going arcs, one with weight $2$
(towards the Or gadget) and two with weight $1$. Hence, $c_j$ must be placed in
the same component as a vertex $u_{(i_{\alpha},j)}$ and a palette vertex
$p_{(i',j)}$, and furthermore, the placement of $u_{(i_{\alpha},j)}$ in the
coalition of $P_{i'}$ encodes an assignment that satisfies the clause
(otherwise this checker would not have been constructed).  We conclude that if
there exists a $c_j$ that is not placed together with its neighbor in the Or
gadget, the clause is satisfied.  What remains, then, is to show that if each
checker vertex was placed together with its neighbor in the Or gadget, the
partition $\pi$ would be unstable.  Indeed, we observe that in this case $r_1$
must be placed with $r_2$ (otherwise $r_1$ has negative utility). But we also
note that if $r_k$ is placed together with $r_{k+1}$, then $r_{k+1}$ must be
placed together with $r_{k+2}$ (otherwise $r_{k+1}$ has negative utility).
Hence, all vertices $r_k$ for $k\in\{1,\ldots,|C_j|\}$ must be in the same
coalition. But then, the utility of $r_{|C_j|}$ is negative, contradiction.
\end{proof}

\begin{corollary}\label{cor:eth2} \cref{thm:eth2} also applies to
\textsc{Connected Nash Stability}.  \end{corollary}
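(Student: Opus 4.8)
The plan is to run the reduction of \cref{thm:eth2} essentially unchanged (up to adding a handful of weight-$0$ arcs, as explained below) and argue that the instance it produces admits a Nash Stable partition if and only if it admits a \emph{connected} Nash Stable partition. One direction is free: a Connected Nash Stable partition is by definition Nash Stable, so if the instance admits one then, by the analysis of \cref{thm:eth2}, $\phi$ is satisfiable; the ETH lower bound then follows exactly as there, since the instance has the same size, pathwidth, maximum degree and maximum weight bounds. The whole difficulty is the forward direction: given a satisfying assignment of $\phi$, one must exhibit a stable partition \emph{all of whose coalitions are connected in the underlying graph}.

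For this I would revisit the stable partition constructed in the proof of \cref{thm:eth2} and make three small adjustments. (i) I would work with a satisfying assignment in which every variable occurring in $\phi$ appears in at least one literal set to True; such an assignment exists, because if some variable $x$ occurs only in currently-False literals then both polarities of $x$ cannot occur (that would force $\sigma(x)$ to be both False and True), so flipping $\sigma(x)$ turns all occurrences of $x$ True without falsifying any clause, and iterating over the at most $n$ offending variables yields the desired assignment. (ii) Selection paths none of whose vertices are referenced by any checker (in particular those encoding only dummy variables) would be placed in their own coalitions; such a path is a directed path, hence connected, and each of its vertices is trivially stable, since its unique out-arc points to its predecessor on the path. (iii) I would add, inside every Or gadget and for every index $k$, a weight-$0$ arc $r_k''\to r_{k+1}$. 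Since weight-$0$ arcs contribute nothing to any utility, this changes neither the set of stable partitions nor the maximum absolute weight; it raises each affected degree by only a constant; and it does not affect pathwidth, because each Or gadget already sits inside a single bag of the path decomposition used in \cref{thm:eth2}. But it makes every Or-gadget coalition connected.

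With these in place, the verification proceeds coalition by coalition. Each palette coalition $P_{i'}$ has the corresponding palette path as a connected spine; every palette-consistency vertex $a_j$ that joins it is adjacent to a spine vertex, and I would place \emph{all} ``satisfied'' checkers (those attached to a literal set to True, not just one per clause) into the palette coalitions, where each is adjacent both to a spine vertex $p_{(i',j)}$ and to a vertex of an indegree-reduction path; crucially, by the choice of assignment in (i), every selection path sitting in $P_{i'}$ is now joined to the spine through the indegree-reduction path and satisfied checker of some clause in which one of its variables appears as a True literal. The remaining coalitions are singletons (the $b_j$, and the $r_k''$ that remain singletons), the Or-gadget coalitions (connected thanks to the new $r_k''\to r_{k+1}$ arcs, with the unsatisfied checkers attached to their out-neighbour $r_k$), and the indegree-reduction paths attached to $u_{(i_\alpha,j)}$ — all visibly connected. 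Stability is inherited verbatim from \cref{thm:eth2}: no utilities change, and moving \emph{more} checkers into palette coalitions only relaxes the stability requirements of the Or gadgets.

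I expect the main obstacle to be precisely the connectivity of the palette coalitions, i.e., guaranteeing that each selection path actually touches its palette spine: this fails for a ``generic'' satisfying assignment (a clause can be satisfied by a literal whose variable we never route into that coalition), which is exactly what forces the flipping argument of (i); one must also re-check that relocating all satisfied checkers at once does not destabilise any Or gadget. The rest — confirming that the weight-$0$ connector arcs leave the degree, weight and pathwidth bounds intact, and that the backward direction of \cref{thm:eth2} is insensitive to them — is routine bookkeeping.
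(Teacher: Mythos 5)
Your proposal is correct, but it takes a genuinely different route from the paper. The paper proves \cref{cor:eth2} by a short black-box transfer (an observation credited to Peters): starting from \emph{any} bounded-degree, bounded-weight instance $G$ of \cref{thm:eth2}, it adds weight-$0$ arcs between every pair of vertices at distance exactly two, obtaining $G^2$ with degree at most $\Delta^2$ and pathwidth at most $p\Delta$; then it argues that $G$ has a Nash Stable partition iff $G^2$ has a \emph{connected} one, because any Nash Stable partition can be repeatedly split along coalition components lying at distance at least $3$ from the rest without changing any agent's utility or creating new profitable deviations, until every coalition is connected in $G^2$. This needs no re-examination of the gadgets and transfers any bounded-degree lower bound to the connected variant. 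You instead re-open the specific construction of \cref{thm:eth2} and make its canonical stable partition connected: you massage the satisfying assignment so that every occurring variable has a True occurrence (so each selection path is stitched to its palette spine through a satisfied checker and its indegree-reduction path), you relocate \emph{all} satisfied checkers into palette coalitions, you add weight-$0$ arcs $r_k''\to r_{k+1}$ so that the blocker coalitions $\{r_{k-1}'',r_k,r_k'\}$ of the Or gadget become connected, and you treat dummy selection paths separately. These are exactly the right obstacles to identify, and your fixes work (relocating satisfied checkers only raises the own-coalition utilities of the $r_k$'s, and the new arcs sit inside a single bag, so degree, weights and pathwidth are unaffected); the cost of your route is the delicate coalition-by-coalition re-verification of stability and connectivity that you correctly flag, whereas the paper's argument buys generality and brevity at the harmless price of squaring the degree and multiplying the pathwidth by $\Delta$.
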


\begin{proof}

We use an argument observed by Peters \cite{Peters16a} to reduce the problem of
finding a (possibly disconnected) Nash Stable partition, to the problem of
finding a connected Nash Stable partition. Consider an ASHG instance $G$ with
maximum degree $\Delta=O(1)$, maximum absolute weight $W=O(1)$ and pathwidth
$p$.  According to \cref{thm:eth2}, it is impossible to decide if $G$ admits a
Nash Stable partition in time $p^{o(p)}n^{O(1)}$. We construct a new instance
$G^2$ by adding an arc of weight $0$ between any two vertices of $G$ which are
at distance exactly two in the underlying graph. We claim that $G^2$ has (i)
bounded maximum degree, as the maximum degree is now $\Delta^2$ (ii) pathwidth
$O(p)$, or more precisely, pathwidth upper-bounded by $p\Delta$, since we can
obtain a decomposition of $G^2$ by taking a decomposition of $G$ and adding to
each bag the neighbors of all its vertices. Finally, $G^2$ has a connected Nash
Stable partition if and only if $G$ has a Nash Stable partition. One direction
is trivial, since we did not change the preferences of any agent. For the other
direction, if $G$ has a (possibly disconnected) Nash Stable partition $\pi$, we
check if $\pi$ (which is stable in $G^2$) becomes connected in $G^2$. If yes,
we are done. If not, this means there exists $C\in \pi$ such that $C$ contains
a component $C_1\subseteq C$ which is at distance at least $3$ from all
vertices of $C\setminus C_1$ in the underlying graph of $G$. But then, we can
obtain a new stable partition of $G$ by splitting $C$ into $C_1$ and
$C\setminus C_1$. This does not change the utility of any agent, and it also
does not create a new option for any agent, as anyone who has an arc towards
$C$, either has arcs towards $C_1$ or towards $C\setminus C_1$. We continue in
this way until $\pi$ is connected in $G^2$. We conclude that if there was an
algorithm with parameter dependence $p^{o(p)}$ for connected Nash Stability on
bounded degree graphs, we would obtain such an algorithm for general Nash
Stability on bounded degree graphs, contradicting the ETH. \end{proof}

\section{Parameterization by Treewidth Only}\label{sec:tw}

In this section we consider \textsc{Nash Stability} on graphs of bounded
treewidth. Peters \cite{Peters16a} showed that this problem is strongly NP-hard
on stars, but for a more general version where preferences are described by
boolean formulas (HC-nets). In \cref{sec:paraNP} we strengthen this hardness
result by showing that \textsc{Nash Stability} remains strongly NP-hard on
stars for additive preferences. We also show that \textsc{Connected Nash
Stability} is strongly NP-hard on stars, albeit also using HC-nets.

The only case that remains is \textsc{Connected Nash Stability} with additive
preferences. Somewhat surprisingly, we show that this case evades our hardness
results because it \emph{is} in fact more tractable.  We establish this via an
algorithm running in pseudo-polynomial time when the treewidth is constant in
\cref{sec:pseudoXP}. As a result, this is the only case of the problem which is
not strongly NP-hard on bounded treewidth graphs (unless P=NP).

We then observe that our algorithm only establishes that the problem is in XP
parameterized by treewidth (for weights written in unary). We show in
\cref{sec:binPack} that this is inevitable, as the problem is W[1]-hard
parameterized by treewidth even when weights are constant.  Hence, our
``pseudo-XP'' algorithm  is qualitatively optimal.

\subsection{Refined paraNP-hardnesss}\label{sec:paraNP}

\begin{theorem}\label{thm:paraNP} \textsc{Nash Stability} is strongly NP-hard
for stars for additive preferences.  \end{theorem}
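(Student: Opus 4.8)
The plan is to reduce from \textsc{3-Partition}, which is strongly NP-hard: given $3m$ positive integers $a_1,\dots,a_{3m}$ with $\sum_i a_i = mV$ and $V/4 < a_i < V/2$ for every $i$, decide whether $\{1,\dots,3m\}$ can be split into $m$ triples each summing to exactly $V$. In the strongly NP-complete version all the $a_i$ (and $V$) are polynomially bounded, so any polynomial-time, polynomially-bounded-weight reduction to \textsc{Nash Stability} on stars establishes strong NP-hardness.

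From such an instance I would build a star with center $v$ and three kinds of leaves. One \emph{anchor} leaf $\ell^\ast$ with $w(\ell^\ast,v)=1$ and $w(v,\ell^\ast)=1$: since $\ell^\ast$ strictly prefers sharing $v$'s coalition, it must do so in every Nash stable partition, which pins $v$'s own utility to $D:=1$. For each $r\in\{1,\dots,m\}$ a \emph{slot} leaf $g_r$ with $w(g_r,v)=-1$ and $w(v,g_r)=V+1$. For each item $a_i$ a \emph{token} leaf $c_i$ with $w(c_i,v)=-1$ and $w(v,c_i)=-a_i$. All slot and token leaves strictly prefer not to be with $v$ (they have negative weight towards $v$, so they would rather be singletons), hence none of them is in $v$'s coalition, and as there are no further leaves $v$'s coalition is exactly $\{v,\ell^\ast\}$. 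The construction has $4m+1$ vertices and all weights have absolute value at most $V+1$, so it is polynomial.

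The correctness proof then amounts to controlling how the slot and token leaves are grouped. Because leaves are mutually indifferent, the only thing a partition of them must respect is $v$'s deviation condition: a Nash stable partition exists iff $\{g_r\}\cup\{c_i\}$ can be partitioned into coalitions each of total weight at most $D=1$ towards $v$ (the requirement $p_v\ge 0$ is automatic). For the forward direction, a valid triple-partition $T_1,\dots,T_m$ gives the coalitions $\{v,\ell^\ast\}$ and $\{g_r\}\cup\{c_i : i\in T_r\}$, each of weight exactly $(V+1)-V=1$ towards $v$; one then checks directly that the center, the anchor, each slot, and each token are all stable. For the converse, a counting argument does the job: a coalition holding $t\ge 1$ slot leaves and a token set $I$ has weight $t(V+1)-\sum_{i\in I}a_i$ towards $v$, so stability forces $\sum_{i\in I}a_i\ge tV+(t-1)$; summing this over all slot-containing coalitions and using $\sum_i a_i=mV$ forces equality throughout, i.e.\ there are exactly $m$ slot-containing coalitions, each with a single slot and tokens summing to exactly $V$, and no token is left elsewhere; finally $V/4<a_i<V/2$ forces each such token set to have size exactly three.

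The main obstacle is this last step: I must emulate an \emph{exact} partition (``$m$ triples summing to $V$'') using only the one-sided ``$\le D$'' stability inequalities, which a priori would allow many slot leaves to be merged into a single oversized coalition. The purpose of the anchor leaf is exactly to make $D$ strictly positive, so that a coalition with $t$ slots effectively ``wastes'' $t-1$ units of token budget; together with the tight global budget $\sum_i a_i=mV$ this rules out any coalition with $t\ge 2$ slots, as well as any token sitting outside a slot coalition, which is what pins the configuration down to an honest 3-partition. Choosing the constants so these inequalities line up, and verifying the (routine) stability of all agent types in the forward direction, is the bulk of the remaining work.
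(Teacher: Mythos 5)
Your reduction is correct and is essentially the paper's own proof: the same \textsc{3-Partition} source, the same star whose center acts as a ``stalker'' with its utility pinned by a helper/anchor leaf, bin/slot leaves that the center values at roughly the target, item/token leaves valued at minus their size, all leaves disliking the center so the center's coalition is forced to be $\{$center, anchor$\}$, and then the same counting argument over slot-containing coalitions against the tight budget $\sum_i a_i = mV$ together with $V/4<a_i<V/2$. The only differences are cosmetic: the paper pins the center at utility $T$ and gives bins weight $2T$ while you pin it at $1$ and give slots weight $V+1$, and you merge the paper's two-step converse (first no item-only coalitions, then exactly one bin per coalition) into a single summation.
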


\begin{proof} We present a reduction from \textsc{3-Partition}. In this problem
we are given a set of $3n$ positive integers $A$, a target value $T$, and are
asked to partition $A$ into $n$ triples, such that each triple has sum exactly
$T$. This problem has long been known to be strongly NP-hard \cite{GareyJ79}.
Furthermore, we can assume that the sum of all elements of $A$ is $nT$
(otherwise the answer is clearly No); and that all elements have values
strictly between $T/4$ and $T/2$, so sets of sizes other than three cannot have
sum $T$ (this can be achieved by adding $T$ to all elements and setting $4T$ as
the new target). 

We construct an ASHG as follows: for each element of $A$ we construct a vertex;
we construct a set $B$ of $n$ additional vertices; we add a ``stalker'' vertex
$s$ and a helper $s'$. The preferences are defined as follows: for all $x\in
A\cup B$ we set $w(x,s)=-1$; for each $x\in B$ we set $w(s,x)=2T$; for each
$x\in A$ we set $w(s,x)=-w(x)$, where $w(x)$ is the value of the corresponding
element in the original instance. Finally, we set $w(s,s')=T$ and $w(s',s)=1$.
The graph is a star as all arcs are incident on $s$.

If there exists a valid 3-partition of $A$, we construct a stable partition of
the new instance by placing $s$ with $s'$ and, for each triple placing its
elements in a coalition with a distinct vertex of $B$. Vertices of $A\cup B$
have utility $0$ in this configuration and no incentive to deviate; while $s$
would have utility $T$ in any existing coalition, so it has no incentive to
leave $s'$; $s'$ is satisfied as she is together with $s$.

For the converse direction, if we have a stable configuration $\pi$, $s'$ must
be with $s$ (otherwise $s'$ has incentive to deviate). Furthermore, $s$ cannot
be with any vertex of $A\cup B$, as placing $s$ with any such vertex would give
that vertex incentive to leave. Hence, $s,s'$ are one coalition of the stable
partition, and $s$ has utility $T$ in this coalition. This implies that every
coalition formed by vertices of $A\cup B$ must have utility at most $T$ for
$s$.

We now want to prove that every coalition of vertices of $A\cup B$ contains
exactly one vertex of $B$. If we show this, then the weight of elements of $A$
placed in each such coalition must be at least $T$, hence it must be exactly
$T$ (as the sum of all elements of $A$ is $nT$). Therefore, we obtain a
solution to the original instance.

To prove that every coalition that contains vertices of $A\cup B$ must contain
exactly one vertex of $B$, suppose first the there exists a coalition that only
contains vertices of $A$. Call the union of all such coalitions $A'\subseteq
A$.  Let $C_1,\ldots, C_k$ be the coalitions that contain some vertex of $B$,
for some $k\le |B|=n$.  We now reach a contradiction as follows: first, since
$s$ does not have incentive to join $C_i$, for $i\in[k]$, we have $\sum_{v\in
C_i} w(s,v) \le T$, therefore $\sum_{i=1}^k\sum_{v\in C_i} w(s,v) \le kT \le
nT$. On the other hand, $\sum_{i=1}^k\sum_{v\in C_i} w(s,v) \ge \sum_{v\in B}
w(s,v) + \sum_{v\in A\setminus A'} w(s,v) > 2nT - nT = nT$, because if $A'$ is
non-empty $\sum_{v\in A\setminus A'} w(s,v) < nT$. Hence we have a
contradiction and from now on we suppose that every coalition that contains a
vertex of $A\cup B$ has non-empty intersection with $B$.

Finally, consider a coalition that contains $k\ge 1$ vertices of $B$.  These
vertices give $s$ utility $2kT$, meaning that the sum of weights of vertices of
$A$ placed in this coalition must be at least $(2k-1)T$. Let $t_i$ be the
number of coalitions which contain exactly $i\ge 1$ vertices of $B$. We obtain
the inequality $\sum_i t_i (2i-1)T \le nT$, because the weight of all elements
of $A$ is $nT$. On the other hand $\sum_i it_i = n$, as we have that $|B|=n$.
We therefore have $\sum_i t_i(2i-1) \le n \Leftrightarrow \sum_i t_i \ge n =
\sum_i it_i \Leftrightarrow \sum_{i>1} (1-i)t_i \ge 0$, which can only hold if
$t_i=0$ for $i>1$.  \end{proof}

\begin{theorem}\label{thm:paraNP2} Deciding if a graphical hedonic game
represented by an HC-net admits a connected Nash Stable partition is NP-hard
even if the input graph is a star and all weights are in $\{1,-1\}$.
\end{theorem}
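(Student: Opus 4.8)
The plan is to give a reduction from \textsc{3-Partition} that is structurally parallel to the one in \cref{thm:paraNP}, but exploits the power of HC-nets to express the connectivity-friendly behaviour we could not realize with purely additive weights. As before, we start from a \textsc{3-Partition} instance $A$ (with $|A|=3n$, target $T$, total sum $nT$, and all values strictly between $T/4$ and $T/2$). We build a star centered at a stalker vertex $s$: for each element of $A$ we create a leaf with value $w(x)$, we add $n$ ``bin'' leaves $B$, and we add a helper leaf $s'$. On a star every coalition that is not a singleton must contain $s$, so connectivity forces \emph{all} non-trivial coalitions to be stars through $s$ — crucially, this means any coalition containing a vertex $x\in A\cup B$ automatically contains $s$ as well. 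This is the extra structural fact the connected variant hands us for free, and it lets us drop the trick of using a stalker to keep bin vertices separated.

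The key step is to use the HC-net to make $s$'s utility for a coalition depend \emph{non-additively} on which leaves it contains. Using boolean formulas over the leaves, $s$ can be given utility that is, say, $+T$ exactly when the coalition contains exactly one bin leaf and a subset of $A$-leaves summing to exactly $T$, and strongly negative otherwise; the $A$ and $B$ leaves (and $s'$) keep simple preferences, e.g.\ $w(x,s)=-1$ for $x\in A\cup B$, $w(s',s)=1$, so that each such leaf is happy as a singleton and unhappy joined with $s$. The formulas in an HC-net can reference multiple agents at once, so encoding ``the $A$-leaves in my coalition sum to $T$'' is straightforward (it is just a disjunction over the valid triples, each conjunct naming three specific element-leaves and the absence of the others within $A$ — or more economically, a clause saying the present $A$-leaves form one of the $\binom{3n}{3}$ admissible triples). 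With such preferences, a connected Nash Stable partition must place $s$ with $s'$, and then stability of the $A\cup B$-leaves forces every non-singleton $A\cup B$-coalition to be a singleton — contradiction — unless the whole configuration is: $\{s,s'\}$ together, and each bin leaf in its own singleton; but that leaves $s$ with an incentive to move. To prevent that trivial degeneracy we instead route $s$'s positive utility through the bins: give each bin leaf $b\in B$ a preference making it \emph{willing} to sit with $s$ together with a correctly-summing triple (this is exactly where HC-nets let a leaf's utility be a formula over the other coalition members), so that the intended solution — $n$ coalitions, each $\{s?\}$…

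Here is the cleaner version of the intended solution and the argument. We set things up so that in a YES instance the stable partition is: $s,s'$ together in one coalition of utility $T$ for $s$; and for each of the $n$ triples, one coalition consisting of the three corresponding $A$-leaves and one bin leaf (no $s$), where every member has utility $0$ via a formula that fires precisely on admissible triples. Conversely, in any connected stable partition, $s'$ must be with $s$ (else $s'$ deviates), $s$ cannot share with an $A$-leaf or a bin leaf (their formulas evaluate to negative there), so $\{s,s'\}$ is a coalition and $s$ has utility $T$ in it. Stability of $s$ then says no coalition of $A\cup B$-leaves gives $s$ utility exceeding $T$; on the other hand each bin-leaf's formula is designed to evaluate negatively unless it sits with an admissible triple of $A$-leaves, which forces a perfect partition into weight-$T$ triples exactly as in \cref{thm:paraNP}. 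All weights used in simple arcs are in $\{1,-1\}$, and the HC-net formulas contribute only the combinatorial structure, not large weights, so the instance is strongly NP-hard with weights in $\{1,-1\}$.

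The main obstacle I anticipate is getting the HC-net preferences for the bin leaves and for $s$ mutually consistent so that the \emph{only} connected stable partitions are the intended ones — in particular, ruling out exotic partitions where $s$'s coalition is a singleton $\{s\}$, or where several bin leaves share a coalition (via their mutual formulas) in a way that still satisfies everyone. I would handle $\{s\}$ by ensuring $s'$ alone gives $s$ utility $0$ while $\{s,s'\}$ gives $T\ge 0$, so $s$ weakly prefers to stay with $s'$ but never wants to leave a utility-$T$ coalition for a singleton; and I would handle the multi-bin case by making each bin leaf's formula require \emph{exactly one} bin leaf in its coalition. Verifying these corner cases — essentially a finite case analysis over the possible coalition types on a star — is where the real work lies, but it is routine once the formulas are fixed, and it mirrors the accounting inequalities ($\sum_i t_i(2i-1)\le n$, $\sum_i it_i=n$) already used in \cref{thm:paraNP}.
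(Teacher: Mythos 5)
Your construction has a fatal structural flaw that you yourself flag in the first paragraph and then violate: on a star, a coalition is connected only if it is a singleton or contains the center $s$. Since coalitions are disjoint, \emph{every} connected partition of a star consists of exactly one (possibly trivial) coalition containing $s$ plus singletons. Your intended YES-instance solution --- $\{s,s'\}$ together with $n$ separate coalitions each made of three $A$-leaves and one bin leaf, ``no $s$'' --- is therefore not a connected partition at all, and no amount of tuning the HC-net formulas can repair this: the multi-bin structure that makes the \textsc{3-Partition} reduction of \cref{thm:paraNP} work simply cannot be realized in the connected setting on a star. This is precisely why the two hardness proofs in the paper diverge: \cref{thm:paraNP} (disconnected coalitions allowed) uses \textsc{3-Partition}, while \cref{thm:paraNP2} reduces from \textsc{3-SAT}, exploiting the fact that a connected partition of a star is determined by the single subset of leaves joining $s$. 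There, the leaves joining $s$ encode a truth assignment (rules penalize $s$ by $-1$ for having both $x_i$ and $\bar{x}_i$, and for each clause vertex in its coalition, and reward $+1$ when a satisfying pattern of literals is present), clause vertices always want to join $s$, and stability of $s$ forces every clause to be satisfied.

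Two further problems, secondary to the above: your rules assign $s$ payoffs such as $+T$ and ``strongly negative,'' which contradicts the theorem's requirement that all weights be in $\{1,-1\}$ (the paper's rules only use payoffs $\pm 1$); and the claim of \emph{strong} NP-hardness is not what the statement asserts --- it only claims NP-hardness, which is what the \textsc{3-SAT} reduction delivers. As written, the proposal does not establish the theorem.
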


\begin{proof}

We present a reduction from \textsc{3-SAT}. Before we proceed, let us briefly
explain that in hedonic games representable by HC-nets, the utility of a vertex
$u$ in a coalition $S$ is calculated as a function of $N(u)\cap S$, using a set
of given ``rules''. A rule is a disjunctive term stating that some vertices of
$N(u)$ must or must not be present in $S$ to activate the rule. Each activated
rule has a pre-defined pay-off and the utility of $u$ is the sum of pay-offs of
activated rules.

Given a CNF formula $\phi$ with $n$ variables and $m$ clauses, we construct a
central vertex $s$, $2n$ literal vertices $x_1, \bar{x}_1, x_2,
\bar{x}_2,\ldots, x_n, \bar{x}_n$, and $m$ clause vertices $c_1,\ldots,c_m$.
The vertices form a star with $s$ as center. For every $c_j$ we define its
utility to be $1$ if it is together with $s$. For $s$ we have the following
rules: for each $i\in\{1,\ldots,n\}$, $s$ has utility $-1$ if both $x_i,
\bar{x}_i$ are in its coalition; for each clause $c_j$, $s$ has utility $-1$ if
$c_j$ is in its coalition; for each clause $c_j$ and each of the (at most 7)
assignments to its literals that satisfy the clause, we add a rule saying that
$s$ has utility $1$ if the literals of this assignment are all in its coalition
and their negations are not in the coalition.

Suppose $\phi$ is satisfiable: we form one coalition with $s$, all clause
vertices $c_j$, and all true literals of a satisfying assignment; all other
literal vertices are singletons. This partition is connected and stable. In
particular, $s$ has utility $0$ (it receives $-1$ from each clause vertex, but
$+1$ from satisfying each clause) and all $c_j$ have utility $1$.  For the
converse direction, in a stable partition $s$ is in the same coalition as at
most one of $x_i, \neg x_i$, for all $i\in\{1,\ldots,n\}$, otherwise it has
negative utility, which means it prefers to be alone. From this we can extract
an assignment to $\phi$. This assignment must satisfy all clauses because all
$c_j$ are with $s$ (giving it utility $-m$), so $m$ rules giving it utility $1$
must be activated, and for each clause at most one such rule can be activated.
\end{proof}

\subsection{Pseudo-XP algorithm for Connected Partitions}\label{sec:pseudoXP}

\begin{theorem}\label{thm:algtw}There exists an algorithm which, given an ASHG instance on $n$
vertices with maximum absolute weight $W$, along with a tree decomposition of
the underlying graph of width $t$, decides if a connected Nash Stable partition
exists in time $(nW)^{O(t^2)}$. \end{theorem}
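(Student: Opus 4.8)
The plan is to design a dynamic programming algorithm over the given tree decomposition that simultaneously (a) keeps track of how the vertices currently in a bag are partitioned into coalitions, (b) records, for each vertex in a bag, its realized utility in its own coalition, and (c) maintains enough bookkeeping to certify both stability (no vertex wants to deviate) and connectivity of each coalition. The key observation enabling this is that, because coalitions must be connected, each coalition intersects any bag in at most $t+1$ vertices and, more importantly, a coalition that has been ``forgotten'' — i.e., none of its vertices appear in the current bag — can never be rejoined or extended later, so it can be finalized. This is exactly the ingredient that was missing in the incomplete argument of Peters discussed in \cref{sec:twd}: we must store the partition of each bag into coalitions, and this is what forces the $t^2$ in the exponent.

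First I would fix a nice tree decomposition and describe the DP table. For a node $x$ with bag $B_x$, a signature consists of: a partition $\mathcal{P}$ of $B_x$ into blocks (the intersections of the current coalitions with the bag); for each block, a Boolean flag indicating whether that coalition has already been ``connected'' within the subtree rooted at $x$ or is still a union of components that must be merged later; for each vertex $v\in B_x$, the total weight $p_v$ it has accumulated so far towards vertices of its own coalition (an integer in $[-nW, nW]$); and for each vertex $v\in B_x$ and each block $P\in\mathcal{P}$, the total weight $v$ has accumulated towards $P$ (again in $[-nW,nW]$), which is needed to check later that $v$ does not prefer to deviate to a coalition that currently touches the bag. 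The number of such signatures is $(nW)^{O(t^2)}$: there are $t^{O(t)}$ partitions, $2^{O(t)}$ flag vectors, and the weight counters contribute a factor $(nW)^{O(t)}$ for the own-coalition values and $(nW)^{O(t^2)}$ for the block-to-vertex values — the last term dominates and gives the claimed bound.

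Next I would spell out the transitions for introduce, forget, and join nodes in the standard way. At an introduce node for $v$, we branch over which existing block $v$ joins (or whether it starts a new block), and update all weight counters using the arcs between $v$ and the rest of the bag; note that since coalitions are connected, $v$ may only join a block containing a neighbor of $v$, which is what keeps the coalitions connected. At a join node we require the two child signatures to agree on the bag partition and on all weight counters' ``ownership'' structure, and we add the accumulated weights appropriately, being careful not to double-count edges inside the bag; the connectivity flags are combined so that a coalition is marked connected only once enough of its pieces have been glued. At a forget node for $v$, we drop $v$ from the bag; at this point, if $v$'s block is about to become empty (i.e., $v$ was the last bag-vertex of its coalition), we finalize that coalition: we verify $v$'s own-coalition utility is $\ge 0$ and that $v$ does not prefer any coalition (we can check deviation to coalitions still represented in the bag using the stored block-to-vertex counters, deviation to a singleton is utility $0$, and deviation to an already-finalized coalition is handled when that coalition was finalized because all of $v$'s neighbors in it were present together with $v$ in some bag — here I would invoke the argument, exactly as in the proof of \cref{thm:algtwd}, that $v$ and all its neighbors share a bag). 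We also check that the finalized coalition has been marked connected. For vertices $v$ that remain in the bag we do not yet finalize their stability; this is only done when they are forgotten. At the root, we accept iff some signature survives with all coalitions finalized and all stability/connectivity checks passed.

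The main obstacle I expect is the deviation check: when vertex $v$ is forgotten, we must be able to compare $p_v(S_v)$ against $p_v(S')$ for \emph{every} coalition $S'$, including coalitions that do not currently intersect the bag. The resolution is that for $v$ to care about $S'$ at all, $S'$ must contain an out-neighbor of $v$, and every out-neighbor of $v$ lies in the bag at the moment $v$ is introduced or forgotten (more precisely, $v$ together with all its neighbors appears in a common bag, since in a tree decomposition each edge is covered and we may assume $v$'s forget node sees all of $N(v)$ — or we preprocess so that the bag forgetting $v$ contains $N[v]$, which costs only a constant factor in width since the instance has the relevant locality; if not, we instead accumulate, for each vertex, a vector of ``residual weights towards each not-yet-finalized coalition'' and check at finalization time). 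Handling this correctly, and making sure the connectivity flags are updated consistently at join nodes without creating spurious ``connected'' certificates, is the delicate part; everything else is routine dynamic programming whose running time is dominated by iterating over pairs of signatures at join nodes, giving $(nW)^{O(t^2)}$ overall.
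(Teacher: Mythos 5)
Your overall plan coincides with the paper's: DP over the tree decomposition whose signature records the partition of the bag into coalitions, each bag vertex's accumulated utility in its own coalition, and each bag vertex's utility towards every other block, with the crucial observation that a coalition which no longer meets the bag is frozen by connectivity; the $(nW)^{O(t^2)}$ count is also the same. However, two of your concrete mechanisms do not work as stated. First, your treatment of deviations to already-finalized coalitions leans on the claim that a vertex $v$ and all its neighbors share a bag, ``exactly as in the proof of \cref{thm:algtwd}''. That argument is unavailable here: it relied on passing to $G^2$ with bounded degree, whereas in this section the degree is unbounded (the motivating instances are stars), and ``preprocessing so that the bag forgetting $v$ contains $N[v]$'' can blow the width up to $\Delta$, not a constant factor. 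Your fallback (``residual weights towards each not-yet-finalized coalition, checked at finalization time'') misses the actual difficulty twice over: the problematic coalitions are precisely the \emph{finalized} ones (those no longer meeting the bag), and you cannot check $v$'s deviation at the moment such a coalition is completed, because $v$'s own utility $p_v$ is not yet determined at that point. The missing ingredient, which the paper stores explicitly, is one extra number per bag vertex: the maximum utility $v$ would obtain in any completed coalition below the current bag (well defined exactly because connectivity forbids such coalitions from growing); it is updated when the last bag vertex of a block is forgotten, using the block-to-vertex counters you already keep, and the comparison against $p_v$ is deferred to the node where $v$ itself is forgotten.

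Second, a single Boolean ``already connected'' flag per block is too coarse for the connectivity bookkeeping, and you yourself flag this as the delicate point without resolving it. At a join node, whether the union of the two partial coalitions is (or can become) connected depends on \emph{which} bag vertices of the block lie in a common connected piece on each side, not just on whether each side is internally connected; and at a forget node you must detect the situation where the forgotten vertex's connected piece contains no other bag vertex while its block does, since that coalition can then never be reconnected. Both checks require storing, as the paper does, a refinement $\pi_2$ of the bag partition into pieces already connected inside the processed subgraph (as in standard Steiner-tree-style DP), and merging $\pi_2$-classes appropriately at introduce and join nodes. With these two repairs -- the per-vertex maximum over completed coalitions and the refinement partition in place of the flags -- your argument becomes the paper's proof; without them, the stability check against forgotten coalitions and the connectivity certification both fail.
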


\begin{proof}[\cref{thm:algtw}]

Our algorithm performs dynamic programming on the tree decomposition following
standard techniques, so we sketch some of the details and focus on the
non-trivial parts of the algorithm. As usual, we assume we have a nice tree
decomposition \cite{CyganFKLMPPS15} and the main challenge is in defining a
notion of signature of a solution, that is, the information that will be stored
in each bag of the decomposition that will allow us to encode the structure of
a solution as it interacts with the bag.

Consider a rooted nice tree decomposition, a bag $B$ and let $B^{\downarrow}$
be the set that contains all vertices of the input graph $G$ that appear in $B$
or in a descendant of $B$. The signature of a partition $\pi$ of $G=(V,E)$ with
respect to $B$ is a collection of the following information:

\begin{enumerate}

\item A partition $\pi_1$ of $B$ into equivalence classes, such that $x,y\in B$
are in the same class of $\pi_1$ if and only if $x,y$ are in the same coalition
of $\pi$ (so $\pi_1$ is the restriction of $\pi$ to $B$).

\item A partition $\pi_2$ of $B$ into equivalence classes, such that $x,y\in B$
are in the same class of $\pi_2$ if and only if $x,y$ are in the same coalition
of $\pi$ and there exists a path in the underlying graph of $G[B^{\downarrow}]$
whose internal vertices are in the same coalition of $\pi$ as $x,y$. Observe
that $\pi_2$ is necessarily a refinement of $\pi_1$. Informally, since $\pi$ is
a connected Nash Stable partition, the classes of $\pi_1$ must eventually
induce connected subgraphs. The partition $\pi_2$ tells which parts of each
class are already connected in $B^{\downarrow}$.

\item For each $x\in B$ its utility to its own coalition, that is, the sum of
the weights of arcs $(x,y)$ where $y\in B^{\downarrow}$ and $y$ is in the same
class of $\pi$ as $x$.

\item For each $x,y\in B$, such that $x,y$ are not in the same class of
$\pi_1$, the utility that $x$ would have if she joined $y$'s coalition, that
is, the sum of the weights of arcs $(x,y')$, where $y'\in B^{\downarrow}$ and
$y'$ is in the same class of $\pi$ as $y$.

\item For each $x\in B$ its maximum utility to any coalition that contains a
neighbor of $x$ and whose vertices are contained in $B^{\downarrow}\setminus
B$, that is, for each such equivalence class $C$ of $\pi$ that is fully
contained in $B^{\downarrow}\setminus B$ we compute $\sum_{y\in C} w(x,y)$ and
store the maximum of these values in the signature.

\end{enumerate} 

Informally, for each $x\in B$ we store, in addition to its placement with
respect to the other vertices of $B$, the utility that this vertex has in its
current coalition, the utility that it would have if it joined the coalition of
another vertex of $B$, and the utility that it would obtain if it joined the
best (in its view) coalition that only contains vertices that appear strictly
lower in the tree decomposition. We note here that a key observation is that
the coalitions which contain a vertex of $B^{\downarrow}\setminus B$ but no
vertex of $B$ are already complete, in the sense that such a coalition cannot
contain a vertex of $V\setminus B^{\downarrow}$ (in that case it would become
disconnected). This ensures that the utility that $x$ would have by joining
such a coalition cannot change as we move up the tree decomposition and
consider more vertices of $V\setminus B^{\downarrow}$. Intuitively, this is the
key property that explains why looking for connected Nash Stable partitions has
lower complexity than looking for (possibly disconnected) Nash Stable
partitions.

Having described the information that we store in our DP table, the rest of the
algorithm only needs to ensure that we appropriately update our tables for
Introduce, Join, and Forget nodes.  Introducing a vertex $x$ is
straightforward, as we consider all signatures contained in the child bag and
for each such signature we consider all the ways we could insert the new vertex
in $\pi_1, \pi_2$ and update weights according to the weights of arcs incident
on $x$. If $x$ creates a path between two vertices of its class of $\pi_1$
which are in distinct classes of $\pi_2$, we merge the two classes of $\pi_2$.
Crucially, $x$ has no neighbors in $B^{\downarrow}\setminus B$, so its utility
to all coalitions contained in this set is $0$. 

Forgetting a vertex is also straightforward, except that we need to make sure
that, according to the current signature the vertex is stable in its coalition
and its coalition is connected. Hence, when forgetting $x\in B$ we discard
all signatures where $x$ has strictly higher utility in a coalition other than
its own and all signatures where $x$ has negative utility in its own coalition;
furthermore we discard solutions where $x$ is the only vertex of its class in
$\pi_2$ and there exists a $y\in B$ such that $x,y$ are in the same class of
$\pi_1$ but in distinct classes of $\pi_2$.  (Informally, $\pi_1$ is the
partition into connected coalitions we intend to form, and $\pi_2$ is the
connectivity we have already assured, so if $x$ is not yet in the same
component as some other vertex $y$ in its coalition, the coalition will end up
being disconnected, with $x,y$ in distinct components).  When forgetting $x$,
if the class of $x$ in $\pi_1$ was a singleton, we also update the weights of
each remaining $y\in B$ by taking into account that the coalition that contains
$x$ is now contained in $B^{\downarrow}\setminus B$ (so we compare the utility
that $y$ would obtain by joining with the maximum utility it has in any such
coalition and update the maximum accordingly).

Finally, for Join nodes, we only consider pairs of signatures from the children
bag that agree on $\pi_1$. We combine the two partitions for $\pi_2$ in the
straightforward way to obtain a transitive closure. Finally, we update the
utility that each $x\in B$ has to the coalition of each $y\in B$ by adding the
utilities it has in the two sub-trees (taking care not to double count the arcs
contained in $B$).

The algorithm we sketched runs in time polynomial in the size of the DP tables,
so what remains is to bound the number of possible signatures. The number of
partitions of each bag is $t^{O(t)}$, while the utility of a vertex in any
coalition is always in $[-nW,nW]$, as the maximum absolute weight is $W$. For
each pair $x\in B$ we store $t+1$ such utilities in the worst case, so there
are at most $(nW)^{O(t^2)}$ possible distinct signatures.  \end{proof}


\subsection{W-hardness for Connected Partitions}\label{sec:binPack}

\begin{theorem}\label{thm:whard} If the ETH is true, deciding if an ASHG of
pathwidth $p$ admits a connected Nash Stable configuration cannot be done in
time $f(p)\cdot n^{o(p/\log p)}$ for any computable function $f$, even if all
weights are in $\{-1,1\}$.  \end{theorem}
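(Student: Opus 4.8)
The plan is to reduce from a problem known to be hard to solve in time $f(k)\cdot n^{o(k/\log k)}$ under the ETH, where $k$ is the natural parameter. The canonical candidate here is \textsc{Bin Packing} (or equivalently \textsc{Unary Bin Packing}), which is W[1]-hard parameterized by the number of bins and, as shown by Jansen et al., cannot be solved in time $f(k)\cdot n^{o(k/\log k)}$ under the ETH, where $k$ is the number of bins. Alternatively one could reduce from $k\times k$-\textsc{Clique} or \textsc{Multicolored Clique} via a standard "grid-like" encoding, but Bin Packing is the cleanest fit because the connectivity structure we can afford is essentially a bounded number of disjoint paths "running through" the instance, exactly as in the reduction of \cref{thm:eth2}; so I would start from \textsc{Bin Packing} with $k$ bins of capacity $T$ and $n'$ items of sizes $a_1,\dots,a_{n'}$ (written in unary), and build an ASHG whose connected coalitions correspond to the bins.

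First I would lay down $k$ "bin paths", one per bin, using the same idea as in \cref{thm:eth2}: each bin path is a long directed path of unit-weight arcs, so that in any connected Nash stable partition the whole path lies in a single coalition, and a palette/consistency gadget forces the $k$ bin coalitions to be pairwise distinct. The pathwidth is then $O(k)$, coming from the $k$ parallel paths plus $O(1)$ extra vertices per "column". Second, for each item $i$ I would introduce a gadget that must attach itself, as a connected unit, to exactly one of the $k$ bin coalitions; the choice of bin is forced to be a genuine choice (each option stable), and the item contributes "weight $a_i$" to that bin. Since all weights must lie in $\{-1,1\}$, the size $a_i$ has to be encoded combinatorially rather than numerically — I would do this by giving item $i$ a bundle of $a_i$ parallel unit-weight "load" vertices, each of which wants to be in the same coalition as the chosen bin; this is where the unary encoding of Bin Packing is used, keeping the instance polynomial. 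Third, for each bin I would build a "capacity checker" gadget, analogous to the Or gadget of \cref{thm:eth2} but used in the opposite direction: a vertex (or small chain) that becomes unstable precisely when strictly more than $T$ load vertices are present in the bin's coalition. Using a vertex $s_j$ with $+1$ arcs toward all potential load vertices of bin $j$ and a fixed negative "budget" (realized as $T$ private vertices it likes or dislikes appropriately, each weight $\pm1$), one forces $p_{s_j}(\text{bin }j)\le 0$, i.e. at most $T$ units, exactly as the stalker gadget in \cref{thm:paraNP} enforces "coalition weight $\le T$". The equivalence argument then mirrors \cref{thm:eth2}: a satisfying packing yields the obvious stable connected partition, and conversely a stable connected partition lets us read off, for each item gadget, the unique bin it joined, with the checker gadgets guaranteeing no bin is overloaded and item-count bookkeeping ($\sum_i a_i = kT$) forcing every bin to be exactly full.

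The parameter transfer is the point to get right: the constructed graph has pathwidth $p = \Theta(k)$, size polynomial in $n'$ (because of the unary load encoding), and an algorithm running in $f(p)\cdot n^{o(p/\log p)}$ would give $f(\Theta(k))\cdot (\mathrm{poly}(n'))^{o(k/\log k)}$ for \textsc{Unary Bin Packing} with $k$ bins, which is ruled out under the ETH. The main obstacle I anticipate is the combination of three constraints simultaneously: bounded weights ($\pm1$ only), bounded pathwidth ($O(k)$, so only $O(k)$ "live" vertices per column), and connectivity. The $\pm1$ restriction means every quantitative threshold (item size $a_i$, bin capacity $T$) must be realized by multiplicity of vertices, which threatens to blow up pathwidth unless the load vertices of item $i$ are introduced and forgotten within a single column and the checker $s_j$ can "see" them without all of them being simultaneously in a bag — so the checker gadget must itself be a path that accumulates the count incrementally (each step contributing $\pm1$), much like the indegree-reduction paths of \cref{thm:eth2}, and the pathwidth argument must carefully verify that at any moment only $O(k)$ of these auxiliary vertices are active. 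Making the connectivity requirement cooperate with this incremental accumulation — ensuring the checker path is genuinely forced into the bin coalition exactly when overloaded, and can be split off cleanly otherwise — is the delicate part, and is where I would spend most of the effort, modeling the checker after the Or gadget of \cref{fig2} but tuned so that "too many satisfied inputs" rather than "zero satisfied inputs" is the unstable configuration.
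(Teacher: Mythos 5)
Your high-level plan coincides with the paper's: reduce from unary \textsc{Bin Packing} with $k$ bins, invoke the $f(k)\cdot n^{o(k/\log k)}$ ETH lower bound of \cite{JansenKMS13}, keep the pathwidth at $O(k)$, realize item sizes and capacities in unary by multiplicities of unit-weight vertices, and use connectivity to force every group of items to sit with a bin vertex whose stability encodes the capacity constraint. Where you diverge is in the gadgetry, and that is where the problems are. First, the construction is over-engineered: the theorem does not bound the maximum degree, so bin paths, palette-consistency gadgets, column-by-column introduction, and an ``incremental'' checker are all unnecessary. The paper simply uses $k$ bin vertices $b_j$, each with a helper $b_j'$ and $w(b_j,b_j')=B$, makes every item vertex $v_i$ adjacent to all bins with $w(v_i,b_j)=1$ and $w(b_j,v_i)=-a_i$, and then replaces each weighted arc by $|w|$ unit-weight intermediate vertices (which are forced into the head's coalition in any connected stable partition). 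The bins and helpers form a vertex cover of size $O(k)$, so the pathwidth bound is immediate, and connectivity forces any coalition containing items to contain some $b_j$, whose stability condition $B-\sum a_i\ge 0$ is exactly the capacity bound.

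The genuine gap is in your capacity checker. Nash stability only requires a vertex's utility in its own coalition to be at least $0$ and at least its utility toward every other coalition; it never imposes an upper bound on that utility. So a checker $s_j$ with $+1$ arcs to the load vertices and a negative private ``budget'' cannot ``force $p_{s_j}(\text{bin }j)\le 0$'': with those signs, the overloaded bin is the stable configuration and the underloaded one is the problematic one. To enforce ``at most $T$'' you must either give the negative arcs to the in-coalition vertex and compensate it with a $+T$ helper (the paper's $b_j,b_j'$ gadget), or use the stalker mechanism of \cref{thm:paraNP}, where an outside vertex with home utility $T$ and $+1$ arcs to loads would deviate into any coalition worth strictly more than $T$; your write-up conflates the two and gets the inequality direction wrong. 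Relatedly, your fallback of a path that ``accumulates the count incrementally'' with $\pm1$ weights is not something the Or gadget of \cref{thm:eth2} provides (it propagates a single bit of instability, it does not count), and you give no construction for such a threshold counter; fortunately it is also superfluous, since a single high-degree checker per bin adds only $O(k)$ vertices to every bag and leaves the pathwidth at $O(k)$.
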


\begin{proof}

We present a reduction from \textsc{Bin Packing}.  It was shown in
\cite{JansenKMS13} that \textsc{Bin Packing} with $n$ items and $k$ bins cannot
be solved in time $f(k)\cdot n^{o(k/\log k)}$, assuming the ETH, even if
weights are given in unary (that is, weights are polynomially bounded in $n$).
Recall that in an instance of $k$-\textsc{Bin Packing} we are given $n$
positive integers (the items) and a bin capacity $B>0$ and our goal is to
partition the $n$ items into $k$ sets such that each set has total sum at most
$B$. We can assume without loss of generality that the sum of the integers
given is exactly $kB$ (if the sum is strictly higher the answer is clearly No,
while if the sum is strictly lower we can pad the instance with items of weight
$1$). 

We construct an ASHG as follows: we construct $k$ vertices $b_1,\ldots,b_k$
representing the bins; we construct $k$ helpers $b_1',\ldots, b_k'$ and set for
each $i$ weight $w(b_i,b_i')=B$; we construct a vertex $v_i$ for each item and
set $w(v_i,b_j) = 1$ for all $j\in\{1,\ldots,k\}$ and $w(b_j,v_i)= -w(v_i)$ for
all $j$, where $w(v_i)$ is the weight of this item in the \textsc{Bin Packing}
instance.

If the \textsc{Bin Packing} instance admits a solution, we form $k$ coalitions
by placing in the $i$-th coalition the vertices $b_i,b_i'$ and all items placed
in bin $i$. We observe that this partition is stable, because vertices
representing items have utility $1$ and cannot increase their utility by
changing sets; vertices $b_i$ have utility $0$ and cannot obtain positive
utility by abandoning $b_i'$; and vertices $b_i'$ are indifferent. 

Conversely, if the ASHG has a connected Nash Stable configuration, we can see
that no coalition may contain vertices $v_i$ representing items of total weight
more than $B$. To see this, observe that such a coalition must contain a vertex
$b_i$ (otherwise it would be disconnected), but then that vertex will have
negative utility. Furthermore, no $v_i$ can be alone, since these vertices
always have an incentive to join some other vertex.  Hence, a Nash Stable
partition gives a partitition of the items into at most $k$ groups of weight
$B$.

The graph constructed has vertex cover $k$, hence also treewidth and pathwidth
$\le k$.  To complete the proof we observe that an edge $e=(u,v)$ of weight
$w(u,v)$ can be replaced by introducing $w(u,v)$ new vertices,
$e_1,\ldots,e_{w(u,v)}$ and setting $w(e_i,v)=1$ and
$w(u,e_i)=\textrm{sgn}(w(u,v))$, where $\textrm{sgn}(x)$ is $1$ if $x$ is
positive and $-1$ otherwise. Without loss of generality $e_i$ is always in the
same coalition as $v$ in any connected Nash Stable partition, so the solution
is preserved. Furthermore, it is not hard to see that this modification does
not increase the pathwidth of the graph. \end{proof}

By a slight modification of the previous proof we also obtain weak NP-hardness
for the case where the input graph has vertex cover $2$. 

\begin{corollary}\label{cor:weak} It is weakly NP-hard to decide if an ASHG on
a graph with vertex cover $2$ admits a connected Nash Stable partition.
\end{corollary}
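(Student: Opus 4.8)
The plan is to reuse the reduction from the proof of \cref{thm:whard}, specialized to the case of two bins. The key observation is that \textsc{Partition} --- given $n$ positive integers with sum $2S$, decide whether they can be split into two groups each of sum $S$ --- is precisely $k$-\textsc{Bin Packing} with $k=2$ and bin capacity $B=S$ (the normalizing assumption that the total equals $kB$ holds automatically), and \textsc{Partition} is weakly NP-hard \cite{GareyJ79}. So it suffices to run the construction of \cref{thm:whard} with $k=2$ and check that the resulting graph has a vertex cover of size $2$.

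Concretely, I would build the ASHG exactly as in \cref{thm:whard} for $k=2$: two ``bin'' vertices $b_1,b_2$, two helpers $b_1',b_2'$ with $w(b_i,b_i')=S$, and one item vertex $v_i$ per integer with $w(v_i,b_j)=1$ and $w(b_j,v_i)=-w(v_i)$ for $j\in\{1,2\}$. The one place where I deviate from the proof of \cref{thm:whard} is that I do \emph{not} perform the final step that replaces each weighted arc by a path of unit-weight arcs; since the present statement only claims \emph{weak} NP-hardness we are allowed to keep large weights, and keeping them is exactly what preserves a constant-size vertex cover. Now $\{b_1,b_2\}$ is a vertex cover of the graph: each helper $b_i'$ is adjacent only to $b_i$, and each item vertex $v_i$ is adjacent only to $b_1$ and $b_2$, so after deleting $b_1,b_2$ no edges remain. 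Hence the graph has vertex cover (and therefore treewidth and pathwidth) $2$.

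For correctness I would simply invoke the $k=2$ instance of the equivalence argument already carried out in \cref{thm:whard}: from a valid \textsc{Partition} one forms the coalitions $\{b_i,b_i'\}\cup\{v_\ell : \ell \text{ in part } i\}$, which are connected and stable (item vertices have utility $1$ and cannot improve, $b_i$ has utility $0$ and will not abandon $b_i'$, and each $b_i'$ is indifferent); conversely, in any connected Nash stable partition every coalition containing item vertices must contain some $b_i$ (otherwise it is disconnected), the items in it must then sum to at most $S$ (else $b_i$ has negative utility), and no $v_i$ can be a singleton, so the items are partitioned into at most two groups each of weight at most $S$, which --- since the total is $2S$ --- is a valid \textsc{Partition} solution.

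I do not expect a genuine obstacle here; this is a routine specialization. The only two points that need a line of verification are that suppressing the unweighting gadget really does leave the vertex cover at $2$ (it does, because every non-isolated vertex other than $b_1,b_2$ has all of its incident arcs going to $b_1$ or $b_2$), and that \textsc{Partition} is an appropriate source problem, which it is, being the classical weakly NP-hard restriction of \textsc{Bin Packing} to two bins.
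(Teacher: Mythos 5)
Your proposal is correct and follows essentially the same route as the paper: run the reduction of \cref{thm:whard} starting from \textsc{Partition} (i.e.\ $2$-\textsc{Bin Packing}), skip the final unit-weight gadget so that $\{b_1,b_2\}$ remains a vertex cover of size $2$, and inherit the equivalence argument, which yields only weak NP-hardness since \textsc{Partition} is weakly NP-hard. The extra verification you include (that the vertex cover bound holds and that the $k=2$ correctness argument goes through) is consistent with what the paper leaves implicit.
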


\begin{proof}

We perform the same reduction as in \cref{thm:whard}, except we start from an
instance of $2$-\textsc{Bin Packing}, which is also known as \textsc{Partition}
and we do not perform the last step to obtain edges with weights in $\{-1,1\}$.
\textsc{Partition} is only weakly NP-hard \cite{GareyJ79}, so we obtain weak
NP-hardness. We note that a very similar reduction was given in
\cite{HanakaKMO19}, but for the problem where preferences are symmetric and we
seek to find a stable partition of maximum social utility.  \end{proof}

\section{Conclusions and Open Problems}

Our results give strong evidence that the precise complexity of \textsc{Nash
Stability} parameterized by $t+\Delta$ is in the order of
$(t\Delta)^{O(t\Delta)}$. It would be interesting to verify if the same is true
for \textsc{Connected Nash Stability}, as this problem turned out to be
slightly easier when parameterized only by treewidth, and is only covered by
\cref{cor:eth2} for the case of bounded-degree graphs. Of course, it would also
be worthwhile to investigate the fine-grained complexity of other notions of
stability. In particular, versions which are complete for higher levels of the
polynomial hierarchy \cite{Peters17} may well turn out to have
double-exponential (or worse) complexity parameterized by treewidth
\cite{LampisMM18,LampisM17}.

\bibliography{hedonic}



\end{document}